\pgfplotsset{compat=1.18}
\definecolor{myBlue}{RGB}{33,113,181}
\definecolor{myRed}{RGB}{203,24,29}
\definecolor{myGreen}{RGB}{35,132,67}
\definecolor{myOrange}{RGB}{217,95,14}
\definecolor{myPurple}{RGB}{117,107,177}
\algrenewcommand\algorithmicrequire{\textbf{Input:}}
\algrenewcommand\algorithmicensure{\textbf{Output:}}
\DeclareMathOperator*{\argmax}{argmax}
\DeclareMathOperator{\diag}{\mathsf{diag}}
\DeclareMathOperator{\triu}{\mathsf{triu}}
\DeclareMathOperator{\cmp}{cmp}
\DeclareMathOperator{\ind}{ind}
\DeclarePairedDelimiter{\ceil}{\lceil}{\rceil}
\DeclarePairedDelimiter{\floor}{\lfloor}{\rfloor}
\newcommand\norm[1]{\left\lVert#1\right\rVert}
\newcommand{\mytilde}{\raisebox{0.5ex}{\texttildelow}}
\newcommand{\repository}{\url{https://github.com/FedericoMazzone/secure-change-point-detection}}
\newcommand{\enc}{\mathsf{Enc}}
\newcommand{\sumR}{\mathsf{SumR}}
\newcommand{\sumC}{\mathsf{SumC}}
\newcommand{\replR}{\mathsf{ReplR}}
\newcommand{\replC}{\mathsf{ReplC}}
\newcommand{\transR}{\mathsf{TransR}}
\newcommand{\transC}{\mathsf{TransC}}
\newcommand{\rankAlg}{\mathsf{Rank}}
\newcommand{\argmaxAlg}{\mathsf{Argmax}}
\newcommand{\meanAlg}{\mathsf{Mean}}
\newcommand{\varAlg}{\mathsf{Variance}}
\newcommand{\turnRatesAlg}{\mathsf{TurningRates}}
\newcommand{\cusumAlg}{\mathsf{CUSUM}}
\newcommand{\cpdAlg}{\mathsf{CPD}}
\newcommand{\vectorLength}{N}
\renewcommand{\vector}{x}
\newcommand{\encVector}{X}
\newcommand{\encRanking}{R}
\newcommand{\encComparison}{C}
\newcommand{\encArgmax}{A}
\newcommand{\tabcolsepValue}{1.3pt}
\newcommand{\arraystretchValue}{1.5}
\newcolumntype{P}[1]{>{\centering\arraybackslash}p{#1}}
\renewcommand\footnotetextcopyrightpermission[1]{}
\title[Secure Change-Point Detection]{Secure Change-Point Detection for Time Series under Homomorphic Encryption}
\author{Federico Mazzone}
\affiliation{%
  \institution{University of Twente}
  \city{Enschede}
  \state{}
  \country{Netherlands}
}
\email{f.mazzone@utwente.nl}
\author{Giorgio Micali}
\affiliation{%
  \institution{University of Twente}
  \city{Enschede}
  \state{}
  \country{Netherlands}
}
\email{g.micali@utwente.nl}
\author{Massimiliano Pronesti}
\affiliation{%
  \institution{IBM Research Europe}
  \city{Dublin}
  \state{}
  \country{Ireland}
}
\email{massimiliano.pronesti@ibm.com}
\begin{document}

\begin{abstract}
We introduce the first method for change-point detection on encrypted time series.
Our approach employs the CKKS homomorphic encryption scheme to detect shifts in statistical properties (e.g., mean, variance, frequency) without ever decrypting the data.
Unlike solutions based on differential privacy, which degrade accuracy through noise injection, our solution preserves utility comparable to plaintext baselines.
We assess its performance through experiments on both synthetic datasets and real-world time series from healthcare and network monitoring.
Notably, our approach can process one million points within 3 minutes.
\end{abstract}

\keywords{time series, change-point detection, privacy, homomorphic encryption, ordinal patterns}

\maketitle
\pagestyle{plain}

\section{Introduction}

The analysis of time series data often requires detecting structural changes, known as \textit{change points}, which indicate significant shifts in the underlying data distribution.
The change-point detection (CPD) problem consists of identifying time indices at which such shifts occur. This problem has been studied for over seventy years, dating back to the pioneering work of Page~\cite{page1955test}.
Since then, CPD has become a central topic of research across many areas of quantitative and applied science: in econometrics and finance~\cite{lavielle2007adaptive,bai1998estimating,DAVIES20123623}, network monitoring~\cite{levy2009detection,lung2012distributed}, signal processing~\cite{Blythe2012}, genetics~\cite{braun2000}, climatology~\cite{verbesselt2010detecting,reeves2007review}, quantum optics~\cite{SCHMIDT20129}, speech recognition~\cite{Chowdhury2011}, MRI analysis~\cite{aston2012evaluating}, and healthcare~\cite{Staudacher2005}.
Across these domains, a change in the data distribution may indicate the onset of a disease or a shift in a patient's condition, a market crash or policy intervention, a sudden variation in climatic patterns, or an anomaly in network traffic. 
Accurately detecting such changes is essential both for retrospective analysis and for real-time decision-making.

Time-series data are often collected by sensors or client devices and transmitted to external servers or data curators for processing.
In many practical scenarios, on-device (local) analysis is either infeasible or undesirable.
For instance, IoT monitoring devices typically lack permanent storage and are mainly designed to collect, optionally encrypt, and forward data rather than to run specialized analytics such as CPD.
Even when feasible, embedding task-specific logic would reduce their flexibility, as analysts may not know in advance which time intervals to inspect or which preprocessing steps to apply.
Outsourcing the raw data first and performing the analysis off-device offers greater versatility and allows multiple tasks, such as CPD, anomaly detection, or other retrospective analyses, to be executed later.
Moreover, many data providers already rely on cloud-based platforms (e.g., Azure Health Data Services or Google Healthcare API) for secure storage, regulatory compliance, or cross-institutional collaboration, making outsourced CPD a natural fit.
Such a setting is also common when data originate from multiple parties and a third-party curator is required to coordinate tasks such as timeline alignment, normalization, or data fusion, which cannot be performed locally.

However, outsourcing raises significant privacy concerns, as the transmitted time-series data are often highly sensitive.
For example, medical time series such as ECGs or EEGs contain personal health information; network traffic traces may reveal private user behavior or communication patterns; and industrial monitoring data can expose proprietary system activity.
Previous work~\cite{NEURIPS2018_f19ec2b8,JMLR:v22:19-770} addressed privacy-preserving CPD exclusively in the \textit{central} privacy model~\cite{dwork2006calibrating}, which assumes the data curator to be trusted and focuses on providing privacy at the moment the computed change-point index is released to an external party.
To do so, they apply differential privacy (DP) by injecting noise into the CPD mechanism, so that only a noisy version of the result is released.
This limits what a potential adversary can infer about the underlying time series, at the cost of reduced accuracy.

In many settings, however, the data provider does not trust the curator, placing the problem in the \textit{local} privacy model.
In this setting, the client must protect its data before sending it out.
To date, no work has addressed privacy-preserving CPD in the local model, and this work aims to close this gap.
In particular, \textbf{our goal is to enable an honest-but-curious curator} (or computing server) \textbf{to perform CPD on the client's data without compromising the privacy of the underlying time series}.
One option would be to use local DP, where the client perturbs the raw time-series values with random noise before sending it to the curator.
As we will show, injecting noise directly at the data level severely degrades the accuracy of the CPD task, making this approach impractical in many realistic scenarios (over 30\% relative error in the resulting change-point index for a 10,000-point series with $\varepsilon = 25$).
To avoid this limitation, we instead employ fully homomorphic encryption (FHE), specifically the CKKS scheme~\cite{cheon2017homomorphic}.
In our solution, the client encrypts its entire time series and sends only ciphertexts to the curator.
Thanks to the homomorphic properties of CKKS, the server can then execute the full CPD pipeline directly over encrypted data, without ever accessing the underlying values, and return the encrypted result to the client.
In addition to avoiding the accuracy issues of local DP, this approach also offers a stronger privacy guarantee, as the curator never sees the data, not even in noisy~form.

These improved accuracy and privacy guarantees come at the cost of a large computational overhead due to the homomorphic operations.
Nonetheless, our choice of FHE over DP is still motivated by the fact that, in many applications, accuracy requirements tend to outweigh runtime constraints.
In clinical monitoring, for instance, detecting a physiological regime change (e.g., heart-rate) within a margin of a few seconds can distinguish between early-stage and late-stage patient deterioration.
By contrast, runtime can tolerate delays, since such analyses are commonly executed offline or asynchronously with respect to data collection.
Similarly, in network-traffic monitoring, accurately spotting sudden changes in packet-rate or latency is important to distinguish normal fluctuations from the onset of a DDoS attack or a routing fault.
Missing such a change can delay incident detection or trigger false alarms, while some computational delay is typically acceptable since analyses often run on buffered or batched traffic.

\paragraph{Our Approach and Contributions}
Prior works~\cite{NEURIPS2018_f19ec2b8,JMLR:v22:19-770} build their approaches on classical detection techniques based on the log-likelihood ratio principle.
However, these techniques rely on strong statistical assumptions that rarely hold in practice:
\begin{enumerate}
    \item independence among the data points of the time series, and
    \item prior knowledge of the data distribution before and after the change point.
\end{enumerate}
To avoid the need for such assumptions, we base our solution on the detection strategy proposed by Bandt and Pompe~\cite{bandt2020order} and Betken et al.~\cite{betken2025ordinal}.
This strategy consists of first transforming the time series into a sequence of ordinal-pattern histograms (i.e., permutations encoding the relative ordering of observations), reducing the problem to an easier mean-change detection task, and then applying standard Cumulative Sum (CUSUM) statistics to solve it on the resulting summaries.
Since ordinal patterns provide a non-parametric description of the time series, they are naturally robust to noisy observations and outliers, and capture temporal structure without assuming independence or specific data distributions.
As a result, they are well suited for detecting changes in temporal dependence (e.g., frequency-related changes), but are inherently insensitive to changes in marginal properties such as the mean or variance, for which we implement different transformations.

The main challenge in implementing this CPD strategy under FHE lies in performing the required computations efficiently, given both the high cost of homomorphic operations and the large size of typical time-series data. 
Our framework builds on recent advances in encrypted comparison techniques~\cite{mazzone2025efficient}, which enable an efficient homomorphic evaluation of ranking and order-statistics operations.
Our approach proceeds in two steps.
First, we homomorphically transform the input time series into a sequence of block-level summaries, implementing different transformations for detecting changes in mean, variance, and frequency.
Second, we implement the CUSUM statistic under encryption and apply it to these block-level sequences to localize the change point.
We evaluate our solution on multiple datasets, including synthetic data from standard toy models and real-world data from healthcare and network monitoring.
Notably, our method can process a time series with 1,000,000 points, detecting mean changes in under two minutes and frequency and variance changes in under three minutes, all while preserving plaintext-level accuracy.

The main contributions of this paper are as follows.
\begin{itemize}
    \item We present the first solution for secure outsourced CPD on dependent time series, achieving plaintext-level accuracy and practical runtime for detecting mean, variance, and frequency shifts under CKKS.
     \item We introduce new cryptographic building blocks for CKKS, including an efficient method for partial sums and an improved implementation of the argmax operation.
    \item We provide a thorough experimental evaluation of our solution on both synthetic and real-world datasets.
\end{itemize}

\section{Background}

\subsection{System and Threat Model}

The most general system model for CPD computation involves the following actors:
\begin{itemize}
    \item one or more data providers,
    \item a computing server or data curator, and
    \item a result recipient.
\end{itemize}
The ideal functionality is that the data providers send their time-series data to the computing server, which computes the CPD index and delivers the result to the recipient.
In this work, we focus on the outsourced-computation setting, hence, with a single data provider, who also coincides with the result recipient.
The data provider encrypts its time series locally and sends the resulting ciphertexts to the server, which performs the entire CPD algorithm homomorphically and returns the encrypted output.
Only the data provider, who holds the secret key, can decrypt the result.

\paragraph{Threat Model}
We consider the computing server to be honest-but-curious, i.e., it follows the protocol but may attempt to extract information about the data provider's time series from any intermediate computation it observes.
In particular, the server is assumed to have full access to all ciphertexts and to all homomorphic computations performed during CPD, and its goal is to infer statistical properties, patterns, or individual data values from them.
Security against such an adversary follows directly from the semantic security of the encryption scheme.
Note that our threat model does not cover a disclosure of the decrypted result to a third-party recipient.
In that case, central-DP mechanisms can be employed to protect the released index.

\subsection{Time Series and the CPD Problem}
A time series \((X_t)_{t \geq 1}\) is a sequence of random variables indexed by discrete time \(t\). 
A standard approach in time series analysis is to model the data as a realization of a \emph{stationary process}, where the statistical properties such as mean, variance, and autocorrelation do not change over time~\cite{brockwelldavis}. 
Formally, stationarity means that the joint distribution of \((X_{t_1}, \ldots, X_{t_k})\) is the same as that of \((X_{t_1+h}, \ldots, X_{t_k+h})\) for all time shifts \(h \in \mathbb{Z}\) and all index tuples \(t_1, \ldots, t_k\).
This assumption enables meaningful inference, as observations from different time intervals can be treated as samples from the same underlying distribution.
However, many real-world time series exhibit stationary behavior only over certain intervals and undergo structural changes at unknown times.
In such cases, the global stationarity assumption fails, and detecting these change-points becomes essential for proper modeling and interpretation.
Sometimes changes can be recognized visually (see Figure~\ref{fig:change-point-examples} for some examples), but in many cases they are too subtle to detect by eye, or appear in contexts where manual inspection is not feasible.
To automate this task, a variety of methods have been proposed, most of which rely on a hypothesis-testing formulation.

\begin{figure}
\centering
\newcommand{\subfigheight}{60pt}
\begin{subfigure}[b]{\columnwidth}
    \centering
    \includegraphics[width=\linewidth,height=\subfigheight]{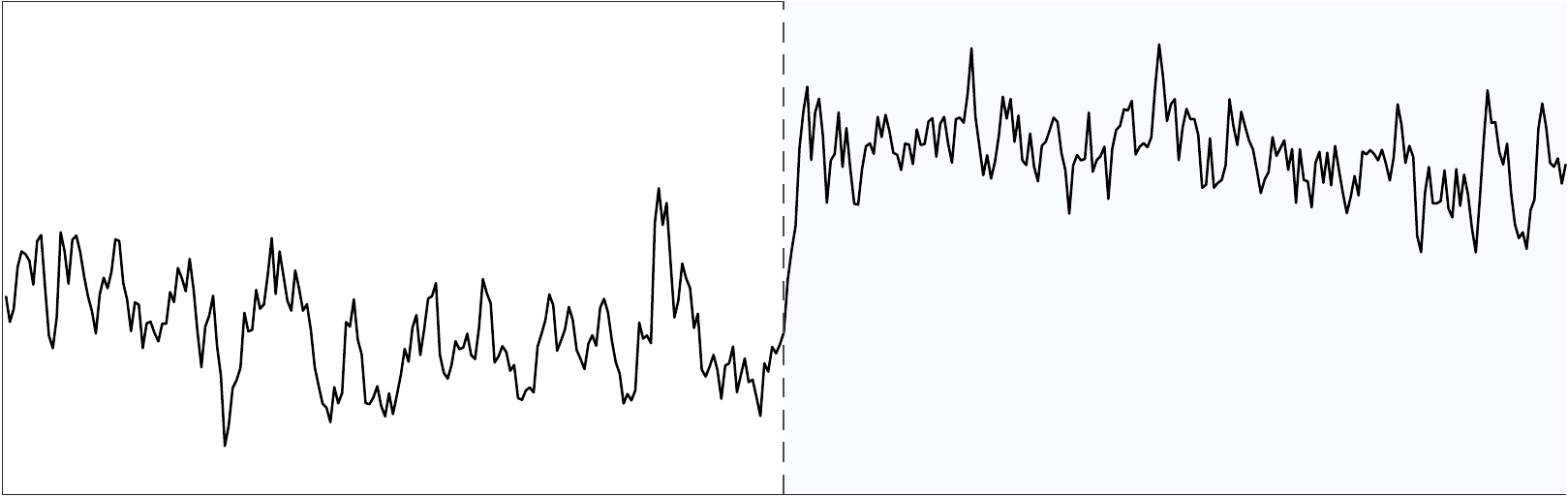}
    \caption{Change in mean}
\end{subfigure}
\vspace{0pt}

\begin{subfigure}[b]{\columnwidth}
    \centering
    \includegraphics[width=\linewidth,height=\subfigheight]{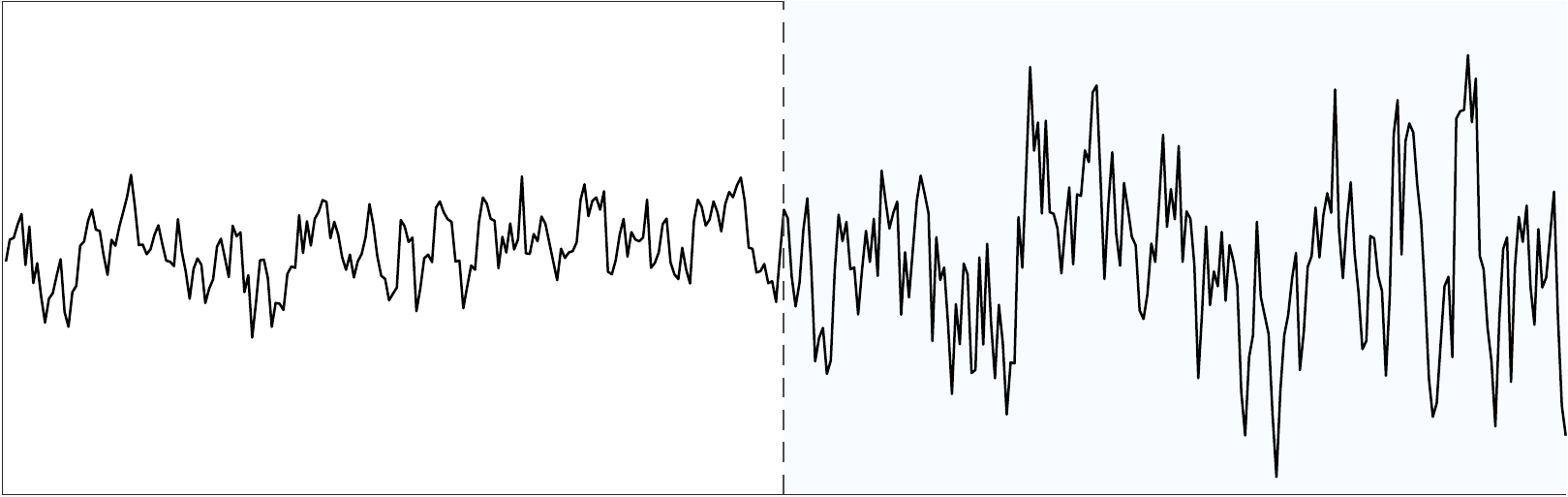}
    \caption{Change in variance}
\end{subfigure}
\vspace{0pt}

\begin{subfigure}[b]{\columnwidth}
    \centering
    \includegraphics[width=\linewidth,height=\subfigheight]{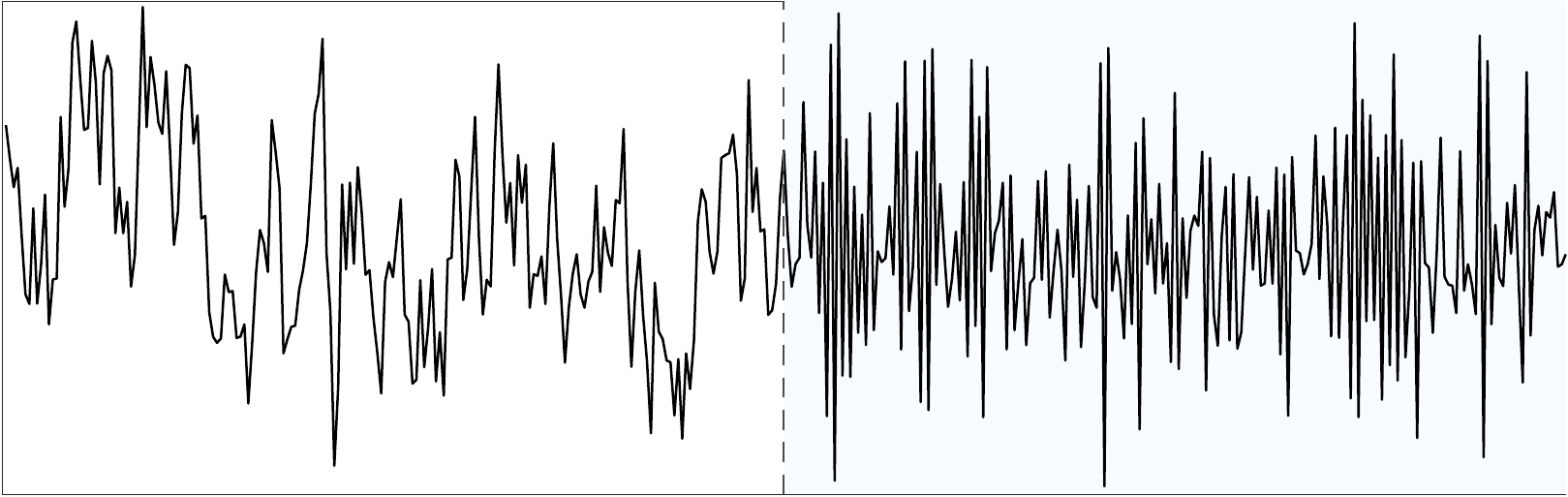}
    \caption{Change in frequency}
\end{subfigure}

\caption{Three common types of change points in time series.}
\label{fig:change-point-examples}
\Description{}
\end{figure}

Related works in the privacy-preserving context~\cite{NEURIPS2018_f19ec2b8,JMLR:v22:19-770} formalize the change-point detection problem by selecting a transformation~\(\psi\) and testing whether its expectation remains constant over~time:
\begin{align*}
\mathcal{H}_0 &: \mathbb{E}[\psi(X_t)] = \mu, \ \forall t \\
\mathcal{H}_1 &: \exists\, \tau\text{ such that } \mathbb{E}[\psi(X_{\tau})] \neq \mathbb{E}[\psi(X_{\tau+1})]\;.
\end{align*}
The choice of~$\psi$ determines the type of change being tested.
For instance, $\psi(x) = x$ corresponds to changes in mean (location), while $\psi(x) = x^2$ corresponds to changes in variance (volatility).
In fact, a classic test for detecting mean shifts is the following:
\begin{align*}
\mathcal{H}_0\!: &\; X_t = \mu + \varepsilon_t,\quad \varepsilon_t \text{ i.i.d.},\quad\forall t \\
\mathcal{H}_1\!: &\; X_t = 
\begin{cases}
\mu_1 + \varepsilon_t, & t = 1, \dots, \tau, \\
\mu_2 + \varepsilon_t, & t = \tau + 1, \dots, n
\end{cases}
\quad \text{with } \mu_1 \neq \mu_2.
\end{align*}
where $n$ is the length of the series.

While the $\psi$-formulation is convenient, it only captures specific aspects of the distribution and ignores the temporal dependence usually present in time-series data, making it inadequate in realistic settings.
Hence, in this work we focus on the more general formulation:
$$
\begin{aligned}
\mathcal{H}_0 &: (X_1, \ldots, X_n) \text{ is a stationary process} \\
\mathcal{H}_1 &: \exists\, \tau \in \{1, \dots, n-1\} \text{ such that }\\ & (X_1, \ldots, X_{\tau}) \overset{\mathcal{D}}{\neq} (X_{\tau+1}, \ldots, X_n)\;.
\end{aligned}
$$
Our goal is to privately estimate the location of the change point~$\tau$ under the alternative hypothesis.
To this end, we build upon an algorithm that transforms the time series into a symbolic representation known as \emph{ordinal patterns}.

We refer the reader to review papers by Gupta et al. \cite{GUPTA2024123342} and Jandhyala et al. \cite{jandhyala2013inference}, and the monographs by Basseville and Nikiforov~\cite{basseville1993abrupt} and Csörgő and Horváth~\cite{CsorgoHorvath1997} for comprehensive treatments of change-point theory and methods.

\subsection{Ordinal Patterns Representation}
\label{sec:background:ordpatt}

To estimate the change-point location under the general formulation, we adopt the approach introduced by Bandt and Pompe~\cite{Bandt-Pompe}, who proposed to encode the dynamics of a time series as a stream of symbols.
These symbols correspond to ordinal patterns, each describing the relative order of values within a local window of the series.
Formally, let $\mathcal{S}_r$ denote the set of permutations of $\{0,1,\ldots,r\}$.
Given a real-valued time series $(X_1,\ldots,X_n)$ and an embedding order $r \geq 1$, we define the ordinal pattern map
\begin{equation}\label{eq:def_op}
\Pi: \mathbb{R}^{r+1} \rightarrow \mathcal{S}_r, \quad (X_t, \ldots, X_{t+r}) \mapsto \pi,
\end{equation}
where $\pi=(\pi_0,\ldots,\pi_r)$ is the permutation that sorts the values $(X_t,\ldots,X_{t+r})$, i.e., $\pi_j$ denotes the rank of $X_{t+j}$ within the window.
Sliding the window along the series yields a symbolic sequence of ordinal patterns. 

\paragraph{Example}
Consider the time series
\[
\{x_1,x_2,\ldots\}=\{4.2,\; 3.1,\; 5.0,\; 6.3,\; 2.9,\; 7.1,\; 1.8,\; 3.7,\; \ldots\}
\]
with embedding order $r=2$, so each ordinal pattern is extracted from a window of three consecutive values.
The first window $(4.2,\; 3.1,\; 5.0)$ is sorted as $(3.1,\; 4.2,\; 5.0)$, yielding the ranks $(2,1,3)$.
The next window $(3.1,\; 5.0,\; 6.3)$ gives $(1,2,3)$, and so on.
Thus, the time series is transformed into a sequence of rank tuples:
\[
\{(2,1,3),\; (1,2,3),\; (2,3,1),\; (2,1,3),\; \ldots \}.
\]
Figure~\ref{fig:op} illustrates the six possible ordinal patterns of order $r=2$.

\begin{figure}
\resizebox{\columnwidth}{!}{
\centering
\begin{tikzpicture}[x=1pt,y=1pt]
\path[use as bounding box,fill=black,fill opacity=0.00] (0,0) rectangle (361.35, 72.27);
\begin{scope}
\path[draw=black,line width= 0.6pt] ( 10.68, 10.19) --
	( 34.93, 29.58) --
	( 59.17, 48.97);
\path[draw=black,line width= 0.4pt,line join=round,line cap=round,fill=black] ( 10.68, 10.19) circle (  2.50);
\path[draw=black,line width= 0.4pt,line join=round,line cap=round,fill=black] ( 34.93, 29.58) circle (  2.50);
\path[draw=black,line width= 0.4pt,line join=round,line cap=round,fill=black] ( 59.17, 48.97) circle (  2.50);
\end{scope}

\begin{scope}
\path[draw=black,line width= 0.6pt,line join=round] ( 69.53, 10.19) --
	( 93.77, 48.97) --
	(118.02, 29.58);
\path[draw=black,line width= 0.4pt,line join=round,line cap=round,fill=black] ( 69.53, 10.19) circle (  2.50);
\path[draw=black,line width= 0.4pt,line join=round,line cap=round,fill=black] ( 93.77, 48.97) circle (  2.50);
\path[draw=black,line width= 0.4pt,line join=round,line cap=round,fill=black] (118.02, 29.58) circle (  2.50);
\end{scope}

\begin{scope}
\path[draw=black,line width= 0.6pt,line join=round] (128.37, 48.97) --
	(152.62, 10.19) --
	(176.87, 29.58);
\path[draw=black,line width= 0.4pt,line join=round,line cap=round,fill=black] (128.37, 48.97) circle (  2.50);
\path[draw=black,line width= 0.4pt,line join=round,line cap=round,fill=black] (152.62, 10.19) circle (  2.50);
\path[draw=black,line width= 0.4pt,line join=round,line cap=round,fill=black] (176.87, 29.58) circle (  2.50);
\end{scope}

\begin{scope}
\path[draw=black,line width= 0.6pt,line join=round] (187.22, 48.97) --
	(211.47, 29.58) --
	(235.72, 10.19);
\path[draw=black,line width= 0.4pt,line join=round,line cap=round,fill=black] (187.22, 48.97) circle (  2.50);
\path[draw=black,line width= 0.4pt,line join=round,line cap=round,fill=black] (211.47, 29.58) circle (  2.50);
\path[draw=black,line width= 0.4pt,line join=round,line cap=round,fill=black] (235.72, 10.19) circle (  2.50);
\end{scope}

\begin{scope}
\path[draw=black,line width= 0.6pt,line join=round] (246.07, 29.58) --
	(270.32, 48.97) --
	(294.57, 10.19);
\path[draw=black,line width= 0.4pt,line join=round,line cap=round,fill=black] (246.07, 29.58) circle (  2.50);
\path[draw=black,line width= 0.4pt,line join=round,line cap=round,fill=black] (270.32, 48.97) circle (  2.50);
\path[draw=black,line width= 0.4pt,line join=round,line cap=round,fill=black] (294.57, 10.19) circle (  2.50);
\end{scope}

\begin{scope}
\path[draw=black,line width= 0.6pt,line join=round] (304.92, 29.58) --
	(329.17, 10.19) --
	(353.42, 48.97);
\path[draw=black,line width= 0.4pt,line join=round,line cap=round,fill=black] (304.92, 29.58) circle (  2.50);
\path[draw=black,line width= 0.4pt,line join=round,line cap=round,fill=black] (329.17, 10.19) circle (  2.50);
\path[draw=black,line width= 0.4pt,line join=round,line cap=round,fill=black] (353.42, 48.97) circle (  2.50);
\end{scope}

\definecolor{text}{gray}{0.10}
\begin{scope}
\node[text=text,anchor=base,inner sep=0pt, outer sep=0pt] at ( 34.93, 56.08) {(1, 2, 3)};
\end{scope}

\begin{scope}
\node[text=text,anchor=base,inner sep=0pt, outer sep=0pt] at ( 93.77, 56.08) {(1, 3, 2)};
\end{scope}

\begin{scope}
\node[text=text,anchor=base,inner sep=0pt, outer sep=0pt] at (152.62, 56.08) {(3, 1, 2)};
\end{scope}

\begin{scope}
\node[text=text,anchor=base,inner sep=0pt, outer sep=0pt] at (211.47, 56.08) {(3, 2, 1)};
\end{scope}

\begin{scope}
\node[text=text,anchor=base,inner sep=0pt, outer sep=0pt] at (270.32, 56.08) {(2, 3, 1)};
\end{scope}

\begin{scope}
\node[text=text,anchor=base,inner sep=0pt, outer sep=0pt] at (329.17, 56.08) {(2, 1, 3)};
\end{scope}
\end{tikzpicture}}
\caption{The six ordinal patterns of order $2$ (figure from~\cite{betken2025ordinal}).}
\label{fig:op}
\Description{}
\end{figure}
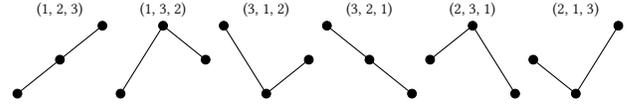

\medskip
This symbolic representation offers multiple advantages: it is invariant under strictly increasing transformations of the data (such as rescaling or nonlinear monotonic mappings), and it is robust to outliers, since the ranks depend only on the relative order of the values rather than their magnitudes.
Moreover, it induces a probability distribution over $\mathcal{S}_r$ that captures the dependence structure of the process.
For each $\pi \in \mathcal{S}_r$, we define the pattern probability
\begin{equation}\label{eq:pattern_prob}
p(\pi) := \mathbb{P}\!\left( \Pi(X_t, \ldots, X_{t+r}) = \pi \right),
\end{equation}
which can be estimated from a finite sample via the empirical frequency
\begin{equation}\label{ordinal_pattern_estimator}
\hat{p}_n(\pi) = \frac{1}{n} \sum_{t=1}^{n-r} \mathds{1} \!\left( \Pi(X_t, \ldots, X_{t+r}) = \pi \right).
\end{equation}
This distribution serves as a compact summary of the time series.
It can be shown that under stationarity, $\hat{p}_n$ remains constant over time, while a change in the underlying data-generating process induces a shift~\cite{betken2025ordinal}.
This way, CPD in real-valued time series reduces to detecting changes in discrete distributions over the \mbox{finite alphabet~$\mathcal{S}_r$.}

\subsection{Turning Rates and CPD Algorithm}
\label{sec:background:turningrate}

Several statistics can be derived from ordinal patterns to detect distributional changes. 
Among these, Bandt ~\cite{bandt2020order} introduced the \emph{turning rate}, which is based on ordinal patterns of order~2.
It is defined as the total probability mass assigned to the four up-down~patterns:
\[
\mathcal{T} = \{(1,3,2), (3,2,1), (2,3,1), (2,1,3)\}.
\]
The turning rate is then
\begin{equation}
q := \sum_{\gamma \in \mathcal{T}} p(\gamma),
\end{equation}
where $p(\gamma)$ is the pattern probability from~\eqref{eq:pattern_prob}.
Given observations $X_1,\ldots,X_n$, the series is partitioned into $n_b = \ceil{n / m}$ non-overlapping blocks of length $m$, and a separate estimate $\hat{q}_{j,m}$ is computed for each block. 
This produces a sequence of local statistics that reflects how the turning rate evolves over time and can therefore be analyzed for structural breaks.
Concretely, for each block we compute
\begin{align}
\label{turning_rate_block}
\hat{q}_{j,m} &= \frac{1}{m} \sum_{i=1}^{m-2} \sum_{\gamma \in \mathcal{T}} 
\mathds{1}\!\left( 
\Pi(X_{s_{i,j}}, X_{s_{i,j}+1}, X_{s_{i,j}+2}) = \gamma 
\right), \\
s_{i,j} &:= m(j-1) + i, \quad j=1,\ldots,n_b. \nonumber
\end{align}
This reformulates the original stationarity testing problem as a mean change detection problem on the sequence $(\hat{q}_{j,m})_{j=1}^{n_b}$.
Specifically, the hypotheses become
\begin{align}
\label{test2}
\mathcal{H}_0:\,& \mathbb{E}[\hat{q}_{j,m}] = \mu_1 \quad \forall j, \\
\mathcal{H}_1:\,& \exists \tau^\ast\in \{1,\dots,n_b\} \text{ s.t. } \nonumber \\
&\mathbb{E}[\hat{q}_{j,m}] = 
\begin{cases}
\mu_1, & j = 1, \ldots, \tau^\ast, \\
\mu_2, & j = \tau^\ast + 1, \ldots, n_b,
\end{cases} \quad \mu_1 \neq \mu_2. \nonumber
\end{align}
To evaluate this mean-change hypothesis, Betken et al.~\cite{betken2025ordinal} use the classical CUSUM statistic applied to $(\hat{q}_{j,m})_j$, which reaches its maximum absolute value at the mean shift location:
\begin{equation}
\label{cusum}
\tau^\ast = \argmax_{k = 1, \ldots, n_b} \left| \sum_{j = 1}^{k} \hat{q}_{j,m} - \frac{k}{n_b} \sum_{j = 1}^{n_b} \hat{q}_{j,m} \right|.
\end{equation}
This provides the estimated change point in the block domain, which can be translated into the original time scale by multiplying it by the block size, i.e., $\tau = m \cdot \tau^\ast$.
We refer to~\cite{betken2025ordinal} for a rigorous analysis of the statistical properties of the turning rate estimator, while Algorithm~\ref{alg:cpd-distr-ptxt} summarizes the procedure.

\begin{algorithm}
\caption{Change-Point Detection via Ordinal Patterns}
\label{alg:cpd-distr-ptxt}
\begin{algorithmic}[1]
\Require $x_1, x_2, \dots, x_{n}$ time series, $m$ block size
\Ensure $\tau$ change-point
\vspace{1pt}
\State $n_b \gets \lceil n/m \rceil$ \Comment{Number of blocks}
\For{$j = 1$ to $n_b$}
  \State $\hat{q}_{j,m} \gets 0$ \Comment{Initialize block statistic}
  \For{$i = 1$ to $m-2$}
    \State $s \gets (j - 1) \cdot m + i$
    \State $\pi \gets \Pi(x_s,\, x_{s+1},\, x_{s+2})$
    \If{$\pi \in \mathcal{T}$}
      \State $\hat{q}_{j,m} \gets \hat{q}_{j,m} + 1$
    \EndIf
  \EndFor
  \State $\hat{q}_{j,m} \gets \hat{q}_{j,m} / m$
\EndFor
\For{$k = 1$ to $n_b$}
  \State $S_k \gets \left| \sum_{j=1}^{k} \hat{q}_{j,m} - \frac{k}{n_b} \sum_{j=1}^{n_b} \hat{q}_{j,m} \right|$
\EndFor
\State \Return $\tau \gets m \cdot \argmax_{k} S_k$ \Comment{see Equation~\ref{cusum}}
\end{algorithmic}
\end{algorithm}

\subsection{FHE and CKKS}
\label{sec:background:fhe}

FHE is a cryptographic primitive that allows arbitrary computations to be performed directly on encrypted data.
This enables a client to outsource computations to untrusted servers without revealing any information about the underlying plaintexts.  
CKKS, in particular, is an FHE scheme introduced by Cheon et al.~\cite{cheon2017homomorphic} and designed to work with vectors of real (floating-point) numbers.
CKKS natively supports three operations on ciphertexts: component-wise addition ($X + Y$), component-wise multiplication ($X \cdot Y$), and vector rotations (left $X \ll k$ and right $X \gg k$).
Throughout the paper, we denote ciphertexts with uppercase letters (e.g., $X$ is the encryption of $x$).  

Besides vectors, CKKS can also handle matrices, typically by encoding them row-by-row.
Additions, multiplications, and rotations can be combined to define operations on matrices.
In particular, for a matrix $x$ encrypted as $X$:
\begin{itemize}
    \item $\sumR(X)$ sums all rows of $x$ into the first row;
    \item $\replR(X)$ replicates the first row across all rows;
    \item $\transR(X)$ transposes the first row into the first column.
\end{itemize}
Similar operations are available for the columns, i.e. $\sumC(X)$, $\replC(X)$, $\transC(X)$.
These algorithms can be implemented recursively to improve efficiency, but require padding the input to the next power of two.
Their pseudocode can be found in Appendix~\ref{app:enc-matrix-operations}.

\subsubsection*{Encrypted Ranking and Argmax}

Computing the ordinal patterns (or ranking) $\Pi$ and the $\argmax$ of a vector under homomorphic encryption is non-trivial, since these operations inherently involve comparisons (e.g., $X > Y$), which are extremely expensive under CKKS.
To compute these operations, we build upon the design of Mazzone et al.~\cite{mazzone2025efficient}, which we adapt and optimize to the CPD setting.  
The core idea is as follows.
Given an input vector $\vector$, we first encode it into the first row of a null matrix $X$.
Two re-encodings are then produced: one where values are replicated across rows, $X_R = \replR(X)$, and one where they are replicated across columns, $X_C = \replC(\transR(X))$.
For example, for $\vector = (\vector_1, \vector_2, \vector_3, \vector_4)$ we obtain
$$
\vector_R =\left[
\begin{array}{cccc}
\vector_1 & \vector_2 & \vector_3 & \vector_4 \\
\vector_1 & \vector_2 & \vector_3 & \vector_4 \\
\vector_1 & \vector_2 & \vector_3 & \vector_4 \\
\vector_1 & \vector_2 & \vector_3 & \vector_4 \\
\end{array} \right]
\quad
\vector_C =
\left[
\begin{array}{cccc}
\vector_1 & \vector_1 & \vector_1 & \vector_1 \\
\vector_2 & \vector_2 & \vector_2 & \vector_2 \\
\vector_3 & \vector_3 & \vector_3 & \vector_3 \\
\vector_4 & \vector_4 & \vector_4 & \vector_4 \\
\end{array}\right] \enspace .
$$
A component-wise comparison between $X_R$ and $X_C$ provides the result of the comparison of all pairs $(\vector_i, \vector_j)$.

To perform encrypted comparison, we employ the polynomial approximation of the sign function provided by Cheon et al.~\cite{cheon2020efficient}, which is also used in~\cite{mazzone2025efficient}.
Specifically, we define two polynomials
\begin{align*}
    f(x) &= (35 x - 35 x^3 + 21 x^5 - 5 x^7) / 2^4 \\
    g(x) &= (4589 x - 16577 x^3 + 25614 x^5- 12860 x^7) / 2^{10} \enspace .
\end{align*}
and compose them a given number of times to obtain $\cmp(x,y) = (f^{d_f}(g^{d_g}(x-y)) + 1) / 2$, where $d_f, d_g$ are the number of times each polynomial is composed with itself.
These polynomials can be evaluated using a proper composition of homomorphic additions and multiplications.
For sufficiently high $d_f, d_g$, we have $\cmp(x,y) \approx 1$ if $x > y$, $\cmp(x,y) \approx 0$ if $x < y$, and $\cmp(x,y) = 0.5$ if $x=y$.  
Applying $\cmp$ to $\vector_R$ and $\vector_C$ results in a comparison matrix where each column contains as many entries close to one as the rank of the corresponding element (up to an offset of 0.5 due to self-comparisons).
Hence, summing the rows and correcting by $0.5$ provides the encrypted rank (see Algorithm~\ref{alg:ranking}).

\begin{algorithm}[!htb]
\caption{$\rankAlg$ from~\cite{mazzone2025efficient}}
\label{alg:ranking}
\begin{algorithmic}[1]
\Require $\encVector$ encryption of $\vector = (\vector_1, \dots, \vector_\vectorLength)$
\Ensure $\encRanking$ encryption of the ranking of $\vector$
\State $\encVector_R \gets \replR(\encVector)$
\State $\encVector_C \gets \replC(\transR(\encVector))$
\State $\encComparison \gets \cmp(\encVector_R, \encVector_C)$
\State $\encRanking \gets \sumR(\encComparison) + (0.5, \dots, 0.5)$
\State \Return $\encRanking$
\end{algorithmic}
\end{algorithm}

To compute the $\argmax$ of $N$ encrypted values, we first compute their ranking using Algorithm~\ref{alg:ranking}.
We then apply an indicator function $\ind_\vectorLength$, defined as $\ind_\vectorLength(x)=1$ if $x=\vectorLength$, and $0$ otherwise.
This function can be implemented as a product of two comparisons.
Applying the indicator function to the output of the ranking algorithm returns a one-hot encoding of the argmax of the input vector (see Algorithm~\ref{alg:argmax}).
In Section~\ref{sec:secureCPD:cusum}, we describe a more efficient argmax design that avoids computing the full ranking.  

\begin{algorithm}[!htb]
\caption{$\argmaxAlg$ from~\cite{mazzone2025efficient}}
\label{alg:argmax}
\begin{algorithmic}[1]
\Require $\encVector$ encryption of $\vector = (\vector_1, \dots, \vector_\vectorLength)$
\Ensure $\encArgmax$ encryption of $\argmax(\vector)$ as one-hot encoding
\State $\encRanking \gets \rankAlg(\encVector)$ \Comment{see Algorithm~\ref{alg:ranking}}
\State $\encArgmax \gets \ind_\vectorLength(\encRanking)$
\State \Return $\encArgmax$
\end{algorithmic}
\end{algorithm}

\noindent
As a side note, the above algorithms may not behave correctly when $\vector_i = \vector_j$ for $i \ne j$.
While Mazzone et al.~\cite{mazzone2025efficient} propose an additional tie-handling mechanism, we can avoid its cost here, since ties usually occur with low probability in time-series data.

\section{Secure CPD}

We now present our approach for estimating the change point $\tau$ of a time series under encryption.
The approach can be divided into two steps:
\begin{enumerate}
    \item a block-wise summarization of the time series data, which turns the CPD problem into a mean-shift detection problem, and
    \item an application of the CUSUM statistic to solve the mean-CPD problem.
\end{enumerate}
The exact summarization function depends on the type of change we want to detect.
Mean and variance are used to detect shifts in the corresponding properties, while the turning rate statistic is used for changes in frequency (see Section~\ref{sec:background:turningrate}).
We first describe the matrix encoding we use for the input data, then we provide a secure implementation of these three summarization functions and the CUSUM statistic, and finally we put them together to build CPD algorithms for detecting changes in mean, variance, and frequency.

\subsection{Block-Based Matrix Encoding}
\label{sec:secure-cpd:assumptions-and-matrix-encoding}

Given a time series $x = (x_1, \dots, x_n)$ and a block size $m$ for our summarization function, we define $n_b = \ceil{n / m}$ to be the number of blocks of the series.
We make the following assumptions on the block size to simplify the description of the algorithm:
\begin{enumerate}
    \item $m$ divides $n$, ensuring all blocks have the same size, and
    \item $m$ is a power of two, allowing a straightforward application of the recursive $\sumR$, $\sumC$, $\replR$, $\replC$, $\transR$, $\transC$ algorithms.
\end{enumerate}
Both assumptions can be lifted with minor adjustments to the algorithms, as discussed in Section~\ref{sec:secure-cpd:general-block-sizes}.

A naive implementation of the plaintext algorithms using the tools from Section~\ref{sec:background:fhe} would result in many nested loops of homomorphic operations and a large number of ciphertexts to manage.
As we will see in the following sections, we can significantly reduce the computational cost by fully exploiting the SIMD capabilities of CKKS.
For this purpose, we represent the time series as a matrix:
$$
\begin{bmatrix}
x_1 & x_2 & \dots & x_m \\
x_{m+1} & x_{m+2} & \dots & x_{2m} \\
\vdots & \vdots &  & \vdots \\
x_{n - m + 1} & x_{n - m + 2} & \dots & x_n \\
\end{bmatrix}
$$
where each row corresponds to a block of $m$ elements.
We also design the summarization functions so that each block's summary is stored in the first element of the corresponding row.
The CUSUM computation is then designed accordingly to take as input the first column of this matrix representation (see Figure~\ref{fig:cpd-pipeline}).
The reason behind this choice will become clear during the description of the different components.

\begin{figure}
    \centering
    \begin{tikzpicture}[node distance=40pt]
    
    \tikzstyle{every node}=[font=\footnotesize]
    \renewcommand{\tabcolsep}{\tabcolsepValue}
    \renewcommand{\arraystretch}{\arraystretchValue}

    \node (ts) {
    \begin{tabular}{|P{10pt}|P{10pt}|P{10pt}|}
        \hline
        $x_1$ & $x_2$ & $x_3$ \\ \hline
        $x_4$ & $x_5$ & $x_6$ \\ \hline
        $x_7$ & $x_8$ & $x_9$ \\ \hline
    \end{tabular}
    };

    \node[right=of ts] (summary) {
    \begin{tabular}{|P{10pt}|P{10pt}|P{10pt}|}
        \hline
        $s_1$ &  &  \\ \hline
        $s_2$ &  &  \\ \hline
        $s_3$ &  &  \\ \hline
    \end{tabular}
    };

    \node[right=of summary] (cusum) {
    \begin{tabular}{|P{10pt}|P{10pt}|P{10pt}|}
        \hline
        0 & 1 & 0 \\ \hline
         &  &  \\ \hline
         &  &  \\ \hline
    \end{tabular}
    };

    \node[above=1pt of ts] {\strut \textbf{Time Series}};
    \node[above=1pt of summary] {\strut \textbf{Summarized Blocks}};
    \node[above=-1pt of cusum, align=center] {\strut \textbf{Change Point} \\ \textbf{(One-Hot)}};

    \draw[->, thick] (ts) -- (summary) node[midway, above] {Summarize};
    \draw[->, thick] (summary) -- (cusum) node[midway, above] {CUSUM};

    \end{tikzpicture}
    \caption{Schematic representation of our secure CPD pipeline. Here $n = 9$, $m = 3$, and the change point is detected at the second block.}
    \label{fig:cpd-pipeline}
    \Description{}
\end{figure}
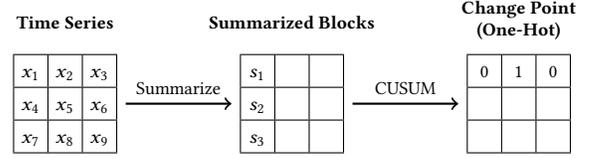

\subsection{Summarization Functions}
\label{sec:summ_functions}

Implementing efficient block-wise mean and variance is relatively straightforward, while the ordinal pattern representation is more challenging.

\subsubsection*{Mean}

Computing the mean of each block can be done efficiently with a $\sumC$ operation, which recursively sums all elements of each block into the first column of the matrix, and a scalar multiplication by $1/m$ (see Algorithm~\ref{alg:enc-mean}).

\begin{algorithm}
\caption{$\meanAlg$}
\label{alg:enc-mean}
\begin{algorithmic}[1]
\Require $X$ encryption of $x = (x_1, \dots, x_n)$, $m$ block size
\Ensure $S$ encryption of $s^\intercal = (s_1, \dots, s_{n_b})^\intercal$ block-wise mean of $x$
\State $S \gets \sumC(X)$
\State $S \gets (1/m) \cdot S$
\State \Return $S$
\end{algorithmic}
\end{algorithm}

\subsubsection*{Variance}

Computing the variance is more complex, but it can still be done efficiently.
We recall that for each block $x_{km + 1}, \dots, x_{(k+1)m}$ we need to compute
$$\frac{1}{m-1} \sum_{i = 1}^{m}{(x_{km + i} - \bar{x}_k)^2}$$
where $\bar{x}_k$ is the mean of the given block.
First, we compute the mean as in Algorithm~\ref{alg:enc-mean} and we replicate it over the columns with a $\replC$.
Then, we subtract it from the original time series matrix, square the result, sum it block-wise with a $\sumC$, and multiply it by $\frac{1}{m-1}$ (see Algorithm~\ref{alg:enc-variance}).

\begin{algorithm}
\caption{$\varAlg$}
\label{alg:enc-variance}
\begin{algorithmic}[1]
\Require $X$ encryption of $x = (x_1, \dots, x_n)$, $m$ block size
\Ensure $S$ encryption of $s^\intercal = (s_1, \dots, s_{n_b})^\intercal$ block-wise variance of $x$
\State $M \gets \meanAlg(X, m)$ \Comment{Algorithm~\ref{alg:enc-mean}}
\State $M \gets \replC(M)$
\State $S \gets 1/(m-1) \cdot \sumC((X - M)^2)$
\State \Return $S$
\end{algorithmic}
\end{algorithm}

\subsubsection*{Turning Rates}

For each block we create a sliding window of size three and count how many triplets have rank in $\mathcal{T} $.
A straightforward implementation would consist of mirroring Algorithm~\ref{alg:cpd-distr-ptxt}.
For each block $j = 1, \dots, n_b$:
\begin{enumerate}
    \item Triplet extraction:
        extract $m - 2$ triplets of consecutive values $(x_i,\, x_{i+1},\, x_{i+2})$ by applying plaintext bitmasks and ciphertext rotations that zero out irrelevant entries and isolate each~triplet.
    \item Ordinal pattern computation: compute the ordinal pattern of each triplet using Algorithm~\ref{alg:ranking}, resulting in $R_i = (r_i, r_{i+1}, r_{i+2})$.
    \item Pattern membership check:
        determine whether the pattern belongs to $\mathcal{T}$.
        This can be done for instance by evaluating
        $$\qquad\quad M_i = \frac{1}{12} (\norm{R_i - (1,2,3)}_2^2 + \norm{R_i - (3,2,1)}_2^2)$$
        which equals one if $R_i \in \mathcal{T}$ and zero otherwise, since the only patterns not in $\mathcal{T}$ are $(1,2,3)$ and $(3,2,1)$ (one can easily verify this by checking all six possible permutations).
    \item Turning rate estimation:
        the number of matching patterns is computed by summing all $M_i$; then it is normalized by the number of triplets, yielding an encrypted estimate of the turning rate:
        $$\qquad\quad \enc(\hat{q}_{j,m}) = \enc(\# \text{matches} / (m - 2))\enspace .$$
\end{enumerate}

This implementation employs $n_b (m - 2)$ ciphertexts and requires as many $\rankAlg$ operations, making it particularly inefficient.
Each call to $\rankAlg$ uses only $3^2 = 9$ slots, leaving most ciphertext slots unused (a ciphertext typically has $2^{13}$ to $2^{17}$ slots).
Parallelizing multiple ranking operations within one ciphertext would be a valid optimization, but a more efficient approach is possible.
To verify whether a triplet $(x_i,\, x_{i+1},\, x_{i+2})$ belongs to $\mathcal{T}$, we do not actually need all 9 comparisons $x_i > x_j$ performed in $\rankAlg$.
It is enough to check whether the triplet is monotonic, increasing or decreasing, and declare membership to $\mathcal{T}$ otherwise.
This happens if and only~if
$$x_i < x_{i+1} < x_{i+2} \quad\text{or}\quad x_i > x_{i+1} > x_{i+2} \enspace .$$
Hence, we potentially need only four comparisons instead of nine.
Furthermore, we observe that $(x_i < x_{i+1}) = 1 - (x_i > x_{i+1})$, and similarly $(x_{i+1} < x_{i+2}) = 1 - (x_{i+1} > x_{i+2})$.
Hence, membership of a triplet in $\mathcal{T}$ can be verified using only two comparisons:
$$c_i := (x_i > x_{i+1}), \qquad c_{i+1} := (x_{i+1} > x_{i+2}) \enspace .$$
In particular, the triplet $(x_i,\, x_{i+1},\, x_{i+2})$ is not a member of $\mathcal{T}$ if and only if either $c_i, c_{i+1}$ are both zero or both one.
The membership condition corresponds to the XNOR $\overline{c_i \oplus c_{i+1}}$, which we can write in arithmetic form as
\begin{equation}
\label{eq:memb-arithmetic-condition}
\overline{c_i \oplus c_{i+1}} = c_i + c_{i+1} - 2 c_i c_{i+1} = 1 - (c_i + c_{i+1} - 1)^2 \enspace .
\end{equation}
Finally, we note that the next triplet $(x_{i+1},\, x_{i+2},\, x_{i+3})$ requires us to compute $c_{i+1} = (x_{i+1} > x_{i+2})$ again, which we can recycle from the current triplet:
\newcommand{\negspace}{\hspace{-12pt}}
\begin{align*}
(x_i,\, x_{i+1},\, x_{i+2}) &\quad\text{requires}\quad (x_i > x_{i+1}) &&\text{and}\quad(x_{i+1} > x_{i+2}) \\[3pt]
(x_{i+1},\, x_{i+2},\, x_{i+3}) &\quad\text{requires}\quad (x_{i+1} > x_{i+2})&&\text{and}\quad(x_{i+2} > x_{i+3}) \\[3pt]
(x_{i+2},\, x_{i+3},\, x_{i+4}) &\quad\text{requires}\quad (x_{i+2} > x_{i+3}) &&\text{and}\quad(x_{i+3} > x_{i+4}) \\[-4pt]
\vdots \hspace{28pt} &\hspace{70pt} \vdots \hspace{30pt} &&\hspace{45pt}\vdots
\end{align*}
Thus, to evaluate all triplets within a block $x_{km + 1}, \dots, x_{(k+1)m}$, it is enough to compute all the adjacent comparisons:
$$(x_{km+1} > x_{km+2}), \dots, (x_{(k+1)m-1} > x_{(k+1)m}) \enspace .$$
This can be done efficiently in one batch by $C = \cmp(X,\allowbreak X \ll 1)$, which operates on every block concurrently.
To determine membership we proceed as described in Equation~\ref{eq:memb-arithmetic-condition} by computing
$$F = 1 - (C + (C \ll 1) - 1)^2$$
which produces a matrix
$$
f_{k,i} = \begin{cases}
    1 & \text{if } r_{k,i} \in \mathcal{T} \\
    0 & \text{otherwise}
\end{cases}
$$
where $r_{k,i} = \Pi(x_{km + i},\, x_{km + i + 1},\, x_{km + i + 2})$.
Finally, by summing each column with $\sumC$ we can count the number of triplets in $\mathcal{T}$ per block, resulting in a column vector $(q_1, \dots, q_{n_b})^\intercal$ such that
$$q_k = \sum_{i = 1}^{m}{f_{k,i}} \enspace .$$
The turning rates are then computed by a plaintext multiplication by $1/(m-2)$.
Algorithm \ref{alg:enc-turning-rates} summarizes these steps in pseudocode.

\begin{algorithm}
\caption{$\turnRatesAlg$}
\label{alg:enc-turning-rates}
\begin{algorithmic}[1]
\Require $X$ encryption of $x = (x_1, \dots, x_n)$, $m$ block size
\Ensure $Q$ encryption of $q^\intercal = (q_1, \dots, q_{n_b})^\intercal$ block-wise turning rates of $x$
\State $C \gets \cmp(X, X \ll 1)$
\State $F \gets 1 - (C + (C \ll 1) - 1)^2$ \Comment{pattern membership}
\State $Q \gets \frac{1}{m-2} \cdot \sumC(F)$ \Comment{block-wise turning rate}\vspace{2pt}
\State \Return $Q$
\end{algorithmic}
\end{algorithm}

\subsection{CUSUM Statistic}
\label{sec:secureCPD:cusum}

The block-wise summaries are now stored in the first column of a matrix as $s_1, \dots, s_{n_b}$ and we need to compute
\begin{equation}
\label{eq:cusum-simple}
\argmax_{k = 1, \ldots, n_b} \left| \sum_{j = 1}^{k} s_j - \frac{k}{n_b} \sum_{j = 1}^{n_b} s_j \right| \enspace .
\end{equation}
There are four main operations we need to consider here:
\begin{itemize}
    \item the total sum $s_T := \sum_{j = 1}^{n_b} s_j$,
    \item the partial sum $s_P^{(k)} := \sum_{j = 1}^{k} s_j$ for each $k$,
    \item the absolute value, and
    \item the argmax computation.
\end{itemize}
We address each operation as follows.

\subsubsection*{Total Sum}

Computing the total sum $s_T$ is straightforward.
Since the $s_i$ are stored as elements of the first column, it is enough to execute a $\sumR$ and the resulting $s_T$ will end up in the top-left entry of the matrix.
In preparation to the argmax computation over $k$ elements, we replicate $s_T$ across the first row using a $\replC$ operation, and then multiply element-wise by the plaintext vector
$$\left(\frac{1}{n_b}, \frac{2}{n_b}, \dots, \frac{n_b}{n_b}\right)$$
to scale it according to Equation~\ref{eq:cusum-simple}.

\subsubsection*{Partial Sums}

Computing the partial sum $s_P^{(k)}$ for each $k$ is more challenging.
A naive solution would compute each $s_P^{(k)}$ individually by masking all elements but the first $k$ and summing them with a $\sumC$, but this approach would split the operations over $n_b$ ciphertexts.
Instead, we can process all the partial sums in one go.
\begin{enumerate}
    \item First, we replicate the input column vector over all columns using $\replC$.
    \item We then multiply the resulting matrix by an upper-triangular bitmask $T = (t_{i,j})$, where $t_{i,j} = 1$ if $j \ge i$ and $0$ otherwise.
    This mask ensures that the first column contains only $s_1$, the second column contains $s_1$ and $s_2$, and in general the $k$-th column contains $s_1,\allowbreak \dots,\allowbreak s_k$.
    \item Finally, a $\sumR$ aggregates the elements in each column, resulting in the vector of partial sums $(s_P^{(1)},\allowbreak s_P^{(2)},\allowbreak \dots,\allowbreak s_P^{(n_b)})$ in the first row of the matrix.
\end{enumerate}

\subsubsection*{Absolute Value}
The absolute value is not a polynomial function, hence it is not straightforward to implement under CKKS.
The usual way is to compute it as $|s| = \sqrt{s^2}$, using an iterative method to approximate the square root~\cite{cheon2019numerical}.
Fortunately, we do not need to compute the absolute values explicitly.
Since $\argmax_k |s_k| = \argmax_k s_k^2$, we can square the values instead and compare them directly, avoiding the expensive square root approximation.

\subsubsection*{Argmax}
The argmax can be computed using the approach described in Section~\ref{sec:background:fhe} (see Algorithm~\ref{alg:argmax}).
This involves computing the comparison matrix $(s_i > s_j)$ for all $i,j$, summing over the rows to obtain the rankings, and applying an indicator function to select the desired rank.
However, in cases where we are specifically interested in the argmax, we can significantly improve this approach.
We observe that the column of the comparison matrix corresponding to the maximum entry contains only ones (and a $0.5$ on the diagonal), while all other columns contain at least one zero.
This means that instead of computing a full ranking and applying an expensive indicator function, we can directly isolate the argmax position by multiplying all values across the rows, that is computing a product down each column.
Only the column corresponding to the maximum will result in a non-zero product, producing a one-hot encoding of the argmax.

This method allows us to replace the indicator function with a much cheaper recursive multiplication over rows, requiring only a logarithmic number of homomorphic rotations and multiplications (think of it as the multiplicative variant of $\sumR$).
To give a practical idea of the speed-up, consider the case of $n_b = 256$.
In our experiments, the $\cmp$ operation costs around $\nu = \rho = 25$ multiplications ($\nu$) and circuit depth ($\rho$).
Hence, the baseline method (comparison followed by $\ind$) requires $2\nu + 1 = 51$ extra multiplications and $\rho + 1 = 26$ extra circuit depth, for a total of $76$ multiplications and $51$ circuit depth.
On the other hand, our optimized method only adds $\log_2(n_b) = 8$ multiplications and depth on top of the comparison, for a total of $33$ multiplications and circuit depth, resulting in a reduction of \mytilde57\% in the number of multiplications and \mytilde35\% in circuit depth.

We put these four operations together in Algorithm~\ref{alg:enc-cusum} to compute the CUSUM statistic provided in Equation~\ref{eq:cusum-simple} under encryption.

\begin{algorithm}
\caption{$\cusumAlg$}
\label{alg:enc-cusum}
\begin{algorithmic}[1]
\Require $S$ encryption of $s^\intercal = (s_1, \dots, s_{n_b})^\intercal$
\Ensure $C$ encryption of a one-hot encoding of the change-point location $\tau$ (block-wise)
\Statex \textbf{\textit{Partial and total sums}}
\State $S_T \gets \replC(\sumR(S))$
\State $S_P \gets \sumR(\replC(S) \cdot \triu(1))$
\Statex \Comment{$\triu(1)$ is the upper-triangular binary mask}
\State $U \gets (S_P - (1/n_b, 2/n_b, \dots, n_b/n_b) \cdot S_T)^2$
\Statex \textbf{\textit{Our revised argmax}}
\State $U_R \gets \replR(U)$
\State $U_C \gets \replC(\transR(U))$
\State $C = \cmp(U_R, U_C) + \diag(0.5)$
\Statex \Comment{$\diag(0.5)$ is the diagonal matrix with $0.5$ entries}
\For{$i = 0, \dots, \log{n_b} - 1$}
    \State $C \gets C \cdot (C \ll m \cdot 2^i)$
\EndFor
\State \Return $C$
\end{algorithmic}
\end{algorithm}

\subsection{CPD Algorithms}

We now combine the summarization algorithms with the CUSUM statistic to build our full CPD pipeline.
Depending on the type of change to detect (mean, variance, or frequency) we use the appropriate summarization function:
\begin{itemize}
    \item mean: uses block-wise mean (see Algorithm~\ref{alg:enc-mean}),
    \item variance: uses block-wise variance (see Algorithm~\ref{alg:enc-variance}),
    \item frequency: uses block-wise turning rate distribution (see Algorithm~\ref{alg:enc-turning-rates}).
\end{itemize}
The output summary is then processed with the encrypted CUSUM algorithm (Algorithm~\ref{alg:enc-cusum}) to obtain the final result.
See Algorithm~\ref{alg:enc-cpd} for the complete pseudocode of the encrypted CPD pipeline.
Note that the resulting change-point index $c_b$ is expressed in block units.
To recover the index in the original time series, it must be scaled by the block size: $c = c_b \cdot m$.

\begin{algorithm}[H]
\caption{$\cpdAlg$}
\label{alg:enc-cpd}
\begin{algorithmic}[1]
\Require $X$ encryption of $x = (x_1, \dots, x_n)$, $m$ block size, $t$~change type (mean, variance, frequency)
\Ensure $C$ encryption of a one-hot encoding of the change-point location $\tau$ (block-wise)
\If{$t = \texttt{mean}$}
    \State $Q \gets \meanAlg(X)$ \Comment{see Algorithm~\ref{alg:enc-mean}}
\ElsIf{$t = \texttt{variance}$}
    \State $Q \gets \varAlg(X)$ \Comment{see Algorithm~\ref{alg:enc-variance}}
\ElsIf{$t = \texttt{frequency}$}
    \State $Q \gets \turnRatesAlg(X)$ \Comment{see Algorithm~\ref{alg:enc-turning-rates}}
\EndIf
\State $C \gets \cusumAlg(Q)$ \Comment{see Algorithm~\ref{alg:enc-cusum}}
\State \Return $C$
\end{algorithmic}
\end{algorithm}

\noindent
Two technical details still need to be addressed though: the normalization bounds for approximate comparisons, and the support for general values of $m$ (lifting the assumptions we made in Section~\ref{sec:secure-cpd:assumptions-and-matrix-encoding}).

\subsubsection*{Normalizing Input of \textsf{cmp}}

The homomorphic evaluation of $\cmp$ relies on a polynomial approximation of the sign function, approximated in the interval $[-1,1]$.
Hence, to make this work correctly, we need to ensure that all inputs passed to the $\cmp$ function are normalized in $[0,1]$.
Since the cost of evaluating the $\cmp$ depends on the precision of the approximation, obtaining the tightest possible bounds on its inputs is important for efficiency.

There are two $\cmp$ operations in our algorithm: one in the turning rate computation and one in the CUSUM algorithm.
The first $\cmp$ is fine as it is directly applied to the input data, which we can simply normalize to $[0,1]$ under the (mild) assumption that upper and lower bounds of the series are known (e.g., heart rate between 0 and 600 bpm, EEG between -200 and +200{\textmu}V).
The second one (CUSUM) is more challenging.
In this case, the input to the $\cmp$ are the values
$$ \left| \sum_{j = 1}^{k} s_j - \frac{k}{n_b} \sum_{j = 1}^{n_b} s_j \right| \qquad \text{for } k = 1, \dots, n_b \enspace .$$
We are providing some bounds on $\Delta_k :=  \sum_{j = 1}^{k} s_j - \frac{k}{n_b} \sum_{j = 1}^{n_b} s_j$.
First, as the time series values $x_j$ are assumed to be normalized in $[0,1]$, then the $s_j$ must also be in $[0,1]$:
\begin{itemize}
    \item for the mean, this is trivial;
    \item for the variance, you can actually see that $s_j \in [0, 1/4]$;
    \item for the turning rates, $s_j$ is the number of triplets in $\mathcal{T}$ divided by the number of total triplets, hence again $s_i \in [0,1]$.
\end{itemize}
Since $s_T \ge s_P^{(k)}$, we have that
\begin{equation*}
\Delta_k = s_P^{(k)} - \frac{k}{n_b} s_T \le s_P^{(k)} - \frac{k}{n_b} s_P^{(k)} = s_P^{(k)} \left(1 - \frac{k}{n_b}\right) \enspace .
\end{equation*}
As $s_j \in [0,1]$, we have that \( s_P^{(k)} \le k \), and thus
\begin{equation*}
\Delta_k \le k\left(1 - \frac{k}{n_b}\right) = k - \frac{k^2}{n_b} \enspace .
\end{equation*}
The right hand side attains maximum at \( k = n_b / 2 \), hence
\begin{equation*}
\Delta_k \le \frac{n_b}{2} - \frac{(n_b/2)^2}{n_b} = \frac{n_b}{2} - \frac{n_b}{4} = \frac{n_b}{4} \enspace .
\end{equation*}
Now, for the lower bound, we can rewrite $-\Delta_k$ as:
\begin{equation*}
    -\Delta_k = \frac{k}{n_b} \sum_{j = 1}^{n_b} s_j - \sum_{j = 1}^{k} s_j = \left(\frac{k}{n_b} - 1\right) \sum_{j = 1}^{k} s_j + \frac{k}{n_b} \sum_{j = k + 1}^{n_b} s_j \enspace .
\end{equation*}
Since $s_j \in [0,1]$ and $k \le n_b$, we have that $\left(\frac{k}{n_b} - 1\right) \sum_{j = 1}^{k} s_j$ is a negative term and $\frac{k}{n_b} \sum_{j = k + 1}^{n_b} s_j$ is a positive term.
Hence, $-\Delta_k$ is maximized when $s_j = 0$ for $j \le k$ and $s_j = 1$ for $j > k$.
So,
\begin{equation*}
    -\Delta_k \le \left(\frac{k}{n_b} - 1\right) \sum_{j = 1}^{k} 0 + \frac{k}{n_b} \sum_{j = k + 1}^{n_b} 1 = \frac{k}{n_b} (n_b - k) = k - \frac{k^2}{n_b} \enspace .
\end{equation*}
Following the same argument as above, we have that $-\Delta_k \le n_b / 4$.
Putting everything together leads to $|\Delta_k| \le n_b / 4$, which provides a bound we can use to normalize $\Delta_k$ before the argmax is computed.

\subsubsection*{Handling Blocks of General Sizes}
\label{sec:secure-cpd:general-block-sizes}

In Section~\ref{sec:secure-cpd:assumptions-and-matrix-encoding} we assumed that the block size $m$ divides the total data length $n$ and is a power of two.
If $m$ does not divide $n$, then the last block has length $m' < m$.
In this case, the series is padded with $m - m'$ zeros, and the summarization functions require some minor adjustments.
For mean and variance, the normalization by $m$ becomes a multiplication by $(\frac{1}{m}, \dots, \frac{1}{m}, \frac{1}{m'})^\intercal$.
For turning rates, a masking is applied to $F$ to mask out the first $m'-2$ elements in the last row; and similarly, the normalization step becomes a multiplication by $(\frac{1}{m-2}, \dots, \frac{1}{m-2}, \frac{1}{m'-2})^\intercal$.

If $m$ is not a power of two, the straightforward solution is to pad each row of the matrix with zeros until a power of two is reached and then adopting similar corrections as above on all rows.
Although, this way we might end up with up to 50\% of unused slots.
To avoid this, all recursive operations on matrices $\sumR$, $\sumC$, $\replR$, $\replC$, $\transR$, $\transC$ can be easily adapted to work recursively on input sizes that are not power of two, using appropriate branching, with their cost going from $\ceil{\log(m)}$ to $2 \floor{\log(m)}$.
See for instance Section~3.1 and Appendix~B of \cite{mazzone2025privacy}.

\section{Experimental Evaluation}
\label{sec:experiments}

We evaluate our encrypted CPD approach on both synthetic and real-world time series, reporting runtime, memory, and accuracy.

\subsection{Experimental Setup}

We use the CKKS implementation from the OpenFHE library~\cite{openFHE},\footnote{\url{https://github.com/openfheorg/openfhe-development}} with a scaling factor ranging between 44 and 50 bits.
The ring dimension is set to $2^{17}$, and all parameters are chosen in accordance with the Homomorphic Encryption Standard to guarantee 128-bit security~\cite{albrecht2015concrete, HomomorphicEncryptionSecurityStandard}.
Our implementation is publicly available at \repository.
To compute ranking and $\argmax$, we employ the $f,g$ approximation of the sign function described in Section~\ref{sec:background:fhe}, with composition degrees $d_f = 2$ and $d_g = 4$.
Since the multiplicative depth of our circuit is bounded by 65, bootstrapping is not required, and CKKS is used as a leveled scheme.
Following Betken et al.~\cite{betken2025ordinal}, we set the block size to $m = \floor{\sqrt{n}}$.
All experiments are executed on a Linux machine equipped with an AMD EPYC 7763 64-Core Processor running at 2.45GHz, and 512~GB of RAM.

\subsection{Synthetic Data}

To evaluate our method under controlled conditions, we use synthetic data in which a change point is explicitly introduced.
Given a length $n$ and a distribution \(\text{Dist}\), we generate a series $(X_1, \dots, X_n)$ with a change point at $\tau$ as follows.

\subsubsection*{Mean Shift}
Given $\mu_1 \neq \mu_2$ and $\sigma=\sigma_1=\sigma_2$:
\begin{align*}
    &X_t\sim  \text{Dist}(\mu_1, \sigma^2)\quad\text{for }\,\, t=1, \ldots,\tau \\
    &X_t\sim  \text{Dist}(\mu_2, \sigma^2)\quad\text{for }\,\, t=\tau+1, \ldots,n 
\end{align*}

\subsubsection*{Variance Shift}
Given $\mu=\mu_1 =\mu_2$ and $\sigma_1\neq \sigma_2$:
\begin{align*}
    &X_t\sim  \text{Dist}(\mu, \sigma_1^2)\quad\text{for }\,\, t=1, \ldots,\tau \\
    &X_t\sim  \text{Dist}(\mu, \sigma_2^2)\quad\text{for }\,\,t=\tau+1, \ldots,n 
\end{align*}
        
\subsubsection*{Frequency Shift}
Given $\mu$, $\sigma$, and $\phi_1 \ne \phi_2$, we simulate data from an autoregressive process of order one (AR(1)):
\begin{align*}
    &X_t= \phi_1 X_{t-1} + \varepsilon_t, \quad \quad\text{for }\,\, t=1, \ldots,\tau \\
    &X_t= \phi_2 X_{t-1} + \varepsilon_t, \quad \quad\text{for }\,\, t=\tau+1, \ldots,n
\end{align*}
where $\varepsilon_t \sim \text{Dist}(\mu, \sigma^2)$ forms a sequence of mutually uncorrelated variables $(\varepsilon_t)_t$, the parameters \(\phi_1, \phi_2 \in (-1,1)\) control the strength of the temporal dependence, and \(X_0 \) is fixed.

\medskip

In our experiments, the change point is placed at $\tau = \floor{n/2}$, \(\text{Dist}\) is selected from Gaussian, Uniform, Laplace, and Student-\(t\) distributions, and the AR(1) parameters are set to $\phi_1 = 0.3$ and $\phi_2 = 0.7$.

\subsection{Real-World Data}
\label{sec:eval:real-data}

We showcase the practical applicability of our approach on three real-world datasets.
They all contain frequency shifts, which represent the most common type of change-points observed in practice.

\subsubsection*{Sleep Phase Change Detection in EEG}
Electroencephalogram (EEG) signals measure brain activity and are widely used in sleep studies to identify transitions between sleep stages.
These transitions are typically reflected by changes in the signal frequency.  
For our evaluation, we use data from the CAP Sleep Database by Terzano et al.~\cite{terzano2001cap}, available through PhysioBank.\footnote{\url{https://physionet.org/content/capslpdb/1.0.0/}}
Following Bandt~\cite{bandt2020order}, we consider the overnight recording of subject~5 from the C4–P4 channel.
The extracted series contains 60{,}000 samples (about four minutes at 256~Hz) and is annotated with expert-provided sleep stage labels, which we use as ground truth.

\subsubsection*{Heartbeat Interval Time Series}
Inter-beat interval (IBI) signals measure the time between successive heartbeats and are commonly used as a non-invasive proxy for autonomic nervous system activity.  
For our assessment, we use data from the Meditation Task Force dataset~\cite{peng1999meditation}, available via PhysioNet.\footnote{\url{https://physionet.org/physiobank/database/meditation/data/}}
During meditation, changes in the IBI series indicate shifts in heart rate frequency, linked to transitions in relaxation and stress levels.
Specifically, we analyze a one-hour meditation recording from a single subject, which yields an IBI time series of length 8{,}083.

\subsubsection*{Network Traffic Anomaly Detection}
Network flow time series are widely used to detect unusual activity patterns in communication networks. 
For our experiments, we use the CESNET-TimeSeries24 dataset~\cite{koumar2025cesnet}, a large-scale collection of time series extracted from IP flow records in the CESNET3 academic ISP network.\footnote{\url{https://zenodo.org/records/13382427}}  
The dataset comprises 40 weeks of monitoring over more than 275{,}000 IP addresses and includes volumetric, ratio-based, and temporal statistics (e.g., number of bytes, packets, unique destination IPs, average flow duration).
Changes in frequency in these series correspond to anomalies in traffic behavior.
We employ the per-IP time series of the inbound-to-outbound packet ratio ($\textsf{dir\_ratio\_packets}$), aggregated over 10-minute intervals, yielding sequences of 40{,}298 time points.

\subsection{Empirical Results}

\begin{figure}
  \centering
\begin{tikzpicture}
  \small
  \pgfplotscreateplotcyclelist{mylist}{%
    {blue,   mark=triangle*, mark options={solid}, thick},
    {red,    mark=square*,   mark options={solid}, thick},
    {green!60!black, mark=o, mark options={solid}, thick},
    {orange, mark=+,         mark options={solid}, thick},
  }
  \begin{axis}[
    tick label style={/pgf/number format/fixed},
    xlabel={Series length $n$},
    ylabel={Runtime (s)},
    grid=both,
    major grid style={line width=.2pt, draw=gray!30},
    minor grid style={line width=.1pt, draw=gray!10},
    every axis plot/.append style={line width=1.1pt},
    cycle list name=mylist,
    enlarge x limits=0.02,
    enlarge y limits=0.05,
    xmin=0,
    height=6cm, width=\linewidth,
    legend style={at={(0.5,1.05)}, anchor=south, draw=none, yshift=-2pt},
    legend columns=3, legend cell align={left},
    scaled x ticks=false,
    scaled y ticks=false
  ]
    \input{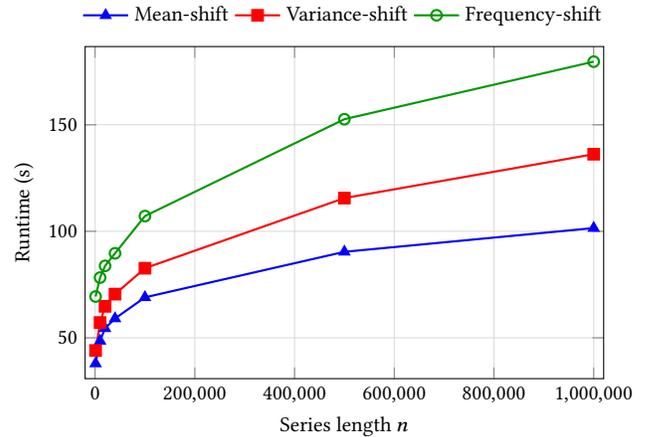}
  \end{axis}
\end{tikzpicture}
\caption{Runtime of our approach for increasing time series length.}
\label{fig:runtime}
\end{figure}

\begin{table*}
    \caption{Total and per-component runtime, communication size, and memory consumption of our approach on synthetic data.}
  \label{tab:runtime-micro}
  \begin{tabular}{lr|rr|rrrrr}
    \toprule
     & & \multicolumn{2}{c|}{\textbf{Client}} & \multicolumn{5}{c}{\textbf{Server}} \\
    Type & Length & Encryption (s) & Comm. (GB) & Turn. Rates (s) & Partial Sums (s) & Argmax (s) & Total (s) & Memory (GB) \\
    \midrule
mean & 1,000 & 0.51 & 0.1 & - & 8.22 & 20.29 & 37.91 & 8.6 \\
mean & 10,000 & 0.55 & 0.1 & - & 11.70 & 24.34 & 48.54 & 11.3 \\
mean & 100,000 & 3.01 & 0.3 & - & 15.47 & 35.06 & 68.99 & 16.1 \\
mean & 1,000,000 & 10.69 & 1.3 & - & 16.52 & 56.98 & 101.52 & 31.6 \\
variance & 1,000 & 0.52 & 0.1 & - & 7.16 & 19.87 & 44.05 & 8.7 \\
variance & 10,000 & 0.56 & 0.1 & - & 10.83 & 22.81 & 57.17 & 11.3 \\
variance & 100,000 & 2.98 & 0.3 & - & 14.78 & 31.37 & 82.67 & 16.4 \\
variance & 1,000,000 & 11.17 & 1.4 & - & 16.66 & 53.35 & 136.20 & 32.3 \\
frequency & 1,000 & 0.85 & 0.1 & 38.16 & 6.09 & 20.34 & 69.38 & 13.2 \\
frequency & 10,000 & 0.86 & 0.1 & 40.11 & 8.63 & 23.54 & 78.27 & 17.4 \\
frequency & 100,000 & 4.92 & 0.5 & 54.67 & 11.69 & 32.95 & 107.16 & 25.6 \\
frequency & 1,000,000 & 17.46 & 2.2 & 108.80 & 12.70 & 51.20 & 179.66 & 48.9 \\
  \bottomrule
\end{tabular}
\end{table*}

\paragraph{Runtime}
In Figure~\ref{fig:runtime}, we report the runtime of our solution on synthetic data with changes in mean, variance, and frequency, for time series of increasing length (from 1,000 to 1,000,000 points).
The reported values are averages over 4 runs, using a Gaussian distribution $\mathcal{N}(0,1)$ as base.
We generate the three types of changes by shifting the mean to $1$, increasing the variance to $2$, and setting $\phi_1 = 0.3$, $\phi_2 = 0.7$ for the frequency change.
As expected, frequency detection takes longer than mean and variance detection, since computing the turning rate statistic requires an additional comparison, which is the bottleneck of our method.
Specifically, detecting changes in mean requires between 37.91s and 101.52s, variance detection between 44.05s and 136.20s, and frequency detection between 69.38s and 179.66s.
Overall, the runtime seems to increase logarithmically with the data size, as expected by the logarithmic scaling of the matrix operations in CKKS.

\begin{table}
  \caption{Performance of plaintext frequency-change detection.}
  \label{tab:plaintext-cpd-perf}
  \begin{tabular}{rrr}
    \toprule
    Length & Runtime (s) & Memory (MB) \\
    \midrule
    1,000 & 0.013 & 0.026 \\
    10,000 & 0.018 & 0.292 \\
    100,000 & 0.152 & 2.491 \\
    1,000,000 & 1.551 & 24.467 \\
  \bottomrule
\end{tabular}
\end{table}

The runtime provided in Figure~\ref{fig:runtime} refers to the total processing time of the encrypted time series on the server-side.
For a runtime breakdown and memory cost, we refer to Table~\ref{tab:runtime-micro}, where we also include the encryption costs on the client-side.
We note that the client's encryption time and ciphertext size scale linearly with the number of data points, with the time ranging from 0.5-0.9s for 1,000 points to 10.7-17.5s for 1,000,000 points, and the ciphertext size going up to 2.2GB.
In practice, the encrypted data would be streamed in \mytilde0.1GB ciphertext chunks, without waiting for all points to be collected.
Table~\ref{tab:plaintext-cpd-perf} reports the runtime and memory consumption of the plaintext baseline for frequency-change detection, which shows that the homomorphic-encryption layer adds between two and three orders of magnitude of runtime overhead and between three and six orders of magnitude of memory overhead to the underlying algorithm.

\paragraph{Ablation Study}
To assess how the individual components of our approach improve over their naive implementations or existing solutions, we consider a time series with 10,000 points over which we want to perform frequency-shift detection.
These points are encrypted into one ciphertext as 100 blocks of 100 elements each.
Starting with the turning rates computation, our Algorithm~\ref{alg:enc-turning-rates} performs 1 SIMD comparison, $\ceil{\log_2{100}} + 2 = 9$ homomorphic rotations, and 2 homomorphic multiplications, leading to a runtime of 40.11s, with the comparison being the bottleneck of the computation (>90\% of the cost).
On the other hand, the naive approach described in Section~\ref{sec:summ_functions} would extract each triplet of each block and process it separately, requiring 9800 SIMD comparisons (one for each triplet ranking), in addition to other relatively smaller costs, for an estimated runtime of \mytilde100.45h.
Improving on the naive strategy, one could pack multiple ranking computations in the same ciphertext, requiring in this case 2 SIMD comparisons, $4 \ceil{\log_2{100}} + 4 = 32$ rotations, and 3 multiplications, for a cost of 86.85s (thus, a 2.17× speed-up for our design).
The actual cost to integrate this method in the overall CPD pipeline would be higher, as the output encoding would not match what we need for computing the CUSUM statistic (see Section~\ref{sec:secure-cpd:assumptions-and-matrix-encoding}).

As for the partial sums, our method (line 2 of Algorithm~\ref{alg:enc-cusum}) requires $2 \ceil{\log_2{100}} = 14$ rotations and 1 multiplication, leading to a runtime of 8.63s.
On the other hand, the naive solution described in Section~\ref{sec:secureCPD:cusum} would require $99$ multiplications to mask out each non-trivial sub-vector, $\ceil{\log_2{100}}$ rotations for each partial sum, hence 700 in total, resulting in a runtime of \mytilde 460s (thus, a 53.3× speed-up for our design).
Lastly, as already discussed in Section~\ref{sec:secureCPD:cusum}, our argmax design improves over the one by Mazzone et al.~\cite{mazzone2025efficient}, more than halving the number of homomorphic multiplications (from 76 to 33), and leading to a runtime reduction from 54.16s to 23.54s (2.30× speed-up).

\begin{table*}[t]
  \caption{Accuracy of our approach on synthetic data with $n = 40{,}000$ on different distributions.}
  \label{tab:synth-accuracy}
  \begin{tabular}{llrrr}
    \toprule
    Change Type & Distributional change & Ground-Truth & Our Approach (rel. error) & Plaintext Baseline (rel. error) \\
    \midrule
    mean & $\mathcal{N}(0,1)$ to $\mathcal{N}(1,1)$ & 20,000 & 20,000 (0.00\%) & 20,000 (0.00\%) \\
    mean & $\mathcal{U}(0,1)$ to $\mathcal{U}(1,1)$ & 20,000 & 20,000 (0.00\%) & 20,000 (0.00\%) \\
    variance & $\mathcal{N}(0,1)$ to $\mathcal{N}(0,2)$ & 20,000 & 20,000 (0.00\%) & 20,000 (0.00\%) \\
    variance & $\mathcal{U}(0,1)$ to $\mathcal{U}(0,2)$ & 20,000 & 20,000 (0.00\%) & 20,000 (0.00\%) \\
    frequency & $\mathcal{N}(0,1)$ with $\phi = 0.3 \rightarrow 0.7$ & 20,000 & 20,000 (0.00\%) & 20,000 (0.00\%) \\
    frequency & $\text{Lap}(0,4)$ with $\phi = 0.3 \rightarrow 0.7$ & 20,000 & 20,000 (0.00\%) & 20,000 (0.00\%) \\
    frequency & $t_5$ with $\phi = 0.3 \rightarrow 0.7$ & 20,000 & 20,000 (0.00\%) & 20,000 (0.00\%) \\
  \bottomrule
\end{tabular}
\end{table*}

\paragraph{Accuracy}
We evaluate the accuracy of our method against both the ground truth and the plaintext baseline.
The detection error is measured using the \emph{relative error}, defined as the normalized deviation of the estimated change point $\hat{\tau}$ from the true change point $\tau$:
\begin{equation}
\label{eq:error}
\text{error} := \left| \hat{\tau} - \tau \right| / \tau.
\end{equation}
On synthetic data, our approach matches the plaintext baseline exactly across different types of distributional changes, as shown in Table~\ref{tab:synth-accuracy} for $n=40{,}000$.
For short synthetic time series, a small relative error is introduced due to the limited number of samples.
However, both the plaintext baseline and our approach produce identical estimates and therefore the same relative error.
This behavior is a limitation of the statistical method and not of the encrypted computation.
When applied to real-world data, our approach achieves accuracy comparable to the plaintext baseline, with only a small deviation from the ground truth (Table~\ref{tab:real-world-perf2}).
Overall, the relative error remains minimal, typically within one percentage point, and may partly reflect uncertainty in the human-annotated ground-truth labels.
Figure~\ref{fig:CP-datasets} illustrates representative real-world time series together with the corresponding turning rate, CUSUM statistic, and detected change point.

\begin{table*}[t]
  \caption{Performance of our approach on real-world datasets for detecting frequency changes.}
  \label{tab:real-world-perf2}
  \begin{tabular}{lrrrr}
    \toprule
    Dataset & Runtime (s) & Ground-Truth & Our Approach (rel. error) & Plaintext Baseline (rel. error) \\
    \midrule
    EEG & 91.19 & 46,500 & 46,360 (0.30\%) & 46,604 (0.22\%) \\
    Meditation & 78.93 & 4,049 & 4,005 (1.09\%) & 4,005 (1.09\%) \\
    Network & 89.19 & 23,752 & 23,800 (0.20\%) & 23,800 (0.20\%) \\
  \bottomrule
\end{tabular}
\end{table*}

\begin{figure}[!t]
   \begin{tikzpicture}
  \pgfplotsset{scaled y ticks=false}
  \small
  \begin{axis}[
    xlabel={Privacy budget ($\varepsilon$)},
    ylabel={Relative error},
    grid=both,
    major grid style={line width=.2pt, draw=gray!30},
    minor grid style={line width=.1pt, draw=gray!10},
    enlarge x limits=0.02,
    enlarge y limits=0.05,
    ymin=0,
    height=6cm,
    width=\linewidth,
    legend columns=2,
    legend cell align={left},
    legend style={
        at={(0.5,1.05)},
        /tikz/every even column/.append style={column sep=0.2cm},
        anchor=south,
        draw=none, 
        yshift=-2pt
    }
  ]

\addplot[name path=dp1_low, draw=none, forget plot] coordinates {
(0.5,0.365720904)
(1,0.365316514)
(1.5,0.364669824)
(2,0.364231363)
(2.5,0.36433796)
(3,0.364891395)
(3.5,0.364679435)
(4,0.364707474)
(4.5,0.364636508)
(5,0.36387812)
(5.5,0.363835587)
(6,0.363865723)
(6.5,0.364147077)
(7,0.364728258)
(7.5,0.364536048)
(8,0.364794427)
(8.5,0.364355932)
(9,0.364372016)
(9.5,0.364506506)
(10,0.36482664)
(10.5,0.364435584)
(11,0.364465498)
(11.5,0.364585114)
(12,0.363588531)
(12.5,0.363712545)
(13,0.363837778)
(13.5,0.363968889)
(14,0.363669576)
(14.5,0.362901751)
(15,0.363462044)
(15.5,0.363483496)
(16,0.363138021)
(16.5,0.362952089)
(17,0.362852519)
(17.5,0.361848222)
(18,0.361362409)
(18.5,0.36204961)
(19,0.362050263)
(19.5,0.360614962)
(20,0.361007409)
(20.5,0.360814701)
(21,0.359844942)
(21.5,0.359105506)
(22,0.358429161)
(22.5,0.358292636)
(23,0.35769298)
(23.5,0.356999297)
(24,0.356376202)
(24.5,0.355579763)
(25,0.355150919)
(25.5,0.355055788)
(26,0.354722597)
(26.5,0.354549532)
(27,0.353871137)
(27.5,0.352664933)
(28,0.351834154)
(28.5,0.351805763)
(29,0.350729364)
(29.5,0.350239418)
(30,0.34948638)
};
\addplot[name path=dp1_high, draw=none, forget plot] coordinates {
(0.5,0.368649176)
(1,0.368240406)
(1.5,0.367592816)
(2,0.367153877)
(2.5,0.36725676)
(3,0.367809365)
(3.5,0.367597165)
(4,0.367629446)
(4.5,0.367560292)
(5,0.36680056)
(5.5,0.366759213)
(6,0.366789557)
(6.5,0.367072083)
(7,0.367656222)
(7.5,0.367467392)
(8,0.367724893)
(8.5,0.367289188)
(9,0.367305144)
(9.5,0.367443174)
(10,0.36776032)
(10.5,0.367371696)
(11,0.367401502)
(11.5,0.367517926)
(12,0.366519309)
(12.5,0.366647135)
(13,0.366772302)
(13.5,0.366901751)
(14,0.366603104)
(14.5,0.365833569)
(15,0.366391356)
(15.5,0.366414424)
(16,0.366070259)
(16.5,0.365882871)
(17,0.365786161)
(17.5,0.364781258)
(18,0.364293991)
(18.5,0.36498303)
(19,0.364987577)
(19.5,0.363550278)
(20,0.363941751)
(20.5,0.363750659)
(21,0.362776818)
(21.5,0.362038134)
(22,0.361361959)
(22.5,0.361225564)
(23,0.36062534)
(23.5,0.359933383)
(24,0.359310158)
(24.5,0.358510797)
(25,0.358080921)
(25.5,0.357987692)
(26,0.357655563)
(26.5,0.357487148)
(27,0.356809823)
(27.5,0.355599467)
(28,0.354770246)
(28.5,0.354742437)
(29,0.353663876)
(29.5,0.353178862)
(30,0.35241962)
};

\addplot[blue!70!white, fill opacity=0.20, draw=none, forget plot] fill between[of=dp1_low and dp1_high];

\addplot[blue!70!white, mark=*,mark size=1.0,thick] coordinates {
(0.5,0.36718504)
(1,0.36677846)
(1.5,0.36613132)
(2,0.36569262)
(2.5,0.36579736)
(3,0.36635038)
(3.5,0.3661383)
(4,0.36616846)
(4.5,0.3660984)
(5,0.36533934)
(5.5,0.3652974)
(6,0.36532764)
(6.5,0.36560958)
(7,0.36619224)
(7.5,0.36600172)
(8,0.36625966)
(8.5,0.36582256)
(9,0.36583858)
(9.5,0.36597484)
(10,0.36629348)
(10.5,0.36590364)
(11,0.3659335)
(11.5,0.36605152)
(12,0.36505392)
(12.5,0.36517984)
(13,0.36530504)
(13.5,0.36543532)
(14,0.36513634)
(14.5,0.36436766)
(15,0.3649267)
(15.5,0.36494896)
(16,0.36460414)
(16.5,0.36441748)
(17,0.36431934)
(17.5,0.36331474)
(18,0.3628282)
(18.5,0.36351632)
(19,0.36351892)
(19.5,0.36208262)
(20,0.36247458)
(20.5,0.36228268)
(21,0.36131088)
(21.5,0.36057182)
(22,0.35989556)
(22.5,0.3597591)
(23,0.35915916)
(23.5,0.35846634)
(24,0.35784318)
(24.5,0.35704528)
(25,0.35661592)
(25.5,0.35652174)
(26,0.35618908)
(26.5,0.35601834)
(27,0.35534048)
(27.5,0.3541322)
(28,0.3533022)
(28.5,0.3532741)
(29,0.35219662)
(29.5,0.35170914)
(30,0.350953)
};
\addlegendentry{$n=\text{1,000}, N=\text{100,000}$}
\addplot[solid, blue!70!white] coordinates {(0, 0.17587362) (30, 0.17587362)};
\addlegendentry{baseline $n=\text{1,000}$}

\addplot[name path=dp2_low, draw=none, forget plot] coordinates {
(0.5,0.364036013)
(1,0.366562927)
(1.5,0.367046922)
(2,0.365192956)
(2.5,0.368223796)
(3,0.370457356)
(3.5,0.368490008)
(4,0.367049993)
(4.5,0.36963189)
(5,0.371553937)
(5.5,0.372341962)
(6,0.369240541)
(6.5,0.371122246)
(7,0.369756696)
(7.5,0.368241914)
(8,0.369596074)
(8.5,0.369002193)
(9,0.370239345)
(9.5,0.366913757)
(10,0.365364109)
(10.5,0.36703279)
(11,0.367853893)
(11.5,0.364492673)
(12,0.365662398)
(12.5,0.365935692)
(13,0.363829811)
(13.5,0.36224187)
(14,0.360659742)
(14.5,0.35685897)
(15,0.359198556)
(15.5,0.355641207)
(16,0.356872175)
(16.5,0.356519919)
(17,0.356124276)
(17.5,0.349688516)
(18,0.3502158)
(18.5,0.34455303)
(19,0.343227107)
(19.5,0.339895268)
(20,0.336444716)
(20.5,0.337021913)
(21,0.333429343)
(21.5,0.330650154)
(22,0.32830702)
(22.5,0.325132309)
(23,0.321940817)
(23.5,0.319690104)
(24,0.315468308)
(24.5,0.310459232)
(25,0.306482944)
(25.5,0.302420896)
(26,0.297360871)
(26.5,0.293825051)
(27,0.290430874)
(27.5,0.285313658)
(28,0.279704395)
(28.5,0.277635459)
(29,0.27442758)
(29.5,0.272107717)
(30,0.268337915)
};
\addplot[name path=dp2_high, draw=none, forget plot] coordinates {
(0.5,0.373355987)
(1,0.375833073)
(1.5,0.376285078)
(2,0.374475044)
(2.5,0.377544204)
(3,0.379754644)
(3.5,0.377789992)
(4,0.376370007)
(4.5,0.37898011)
(5,0.380886063)
(5.5,0.381718038)
(6,0.378571459)
(6.5,0.380513754)
(7,0.379095304)
(7.5,0.377602086)
(8,0.378935926)
(8.5,0.378333807)
(9,0.379584655)
(9.5,0.376242243)
(10,0.374635891)
(10.5,0.37627921)
(11,0.377138107)
(11.5,0.373731327)
(12,0.374925602)
(12.5,0.375180308)
(13,0.373122189)
(13.5,0.37154613)
(14,0.369940258)
(14.5,0.36612503)
(15,0.368493444)
(15.5,0.364938793)
(16,0.366187825)
(16.5,0.365896081)
(17,0.365503724)
(17.5,0.359003484)
(18,0.3595882)
(18.5,0.35390297)
(19,0.352616893)
(19.5,0.349256732)
(20,0.345779284)
(20.5,0.346354087)
(21,0.342734657)
(21.5,0.339985846)
(22,0.33763298)
(22.5,0.334491691)
(23,0.331291183)
(23.5,0.329041896)
(24,0.324803692)
(24.5,0.319708768)
(25,0.315765056)
(25.5,0.311719104)
(26,0.306567129)
(26.5,0.302978949)
(27,0.299557126)
(27.5,0.294370342)
(28,0.288719605)
(28.5,0.286668541)
(29,0.28344842)
(29.5,0.281136283)
(30,0.277318085)
};

\addplot[orange, fill opacity=0.20, draw=none, forget plot] fill between[of=dp2_low and dp2_high];

\addplot[orange, mark=square*,mark size=1.0,thick] coordinates {
(0.5,0.368696)
(1,0.371198)
(1.5,0.371666)
(2,0.369834)
(2.5,0.372884)
(3,0.375106)
(3.5,0.37314)
(4,0.37171)
(4.5,0.374306)
(5,0.37622)
(5.5,0.37703)
(6,0.373906)
(6.5,0.375818)
(7,0.374426)
(7.5,0.372922)
(8,0.374266)
(8.5,0.373668)
(9,0.374912)
(9.5,0.371578)
(10,0.37)
(10.5,0.371656)
(11,0.372496)
(11.5,0.369112)
(12,0.370294)
(12.5,0.370558)
(13,0.368476)
(13.5,0.366894)
(14,0.3653)
(14.5,0.361492)
(15,0.363846)
(15.5,0.36029)
(16,0.36153)
(16.5,0.361208)
(17,0.360814)
(17.5,0.354346)
(18,0.354902)
(18.5,0.349228)
(19,0.347922)
(19.5,0.344576)
(20,0.341112)
(20.5,0.341688)
(21,0.338082)
(21.5,0.335318)
(22,0.33297)
(22.5,0.329812)
(23,0.326616)
(23.5,0.324366)
(24,0.320136)
(24.5,0.315084)
(25,0.311124)
(25.5,0.30707)
(26,0.301964)
(26.5,0.298402)
(27,0.294994)
(27.5,0.289842)
(28,0.284212)
(28.5,0.282152)
(29,0.278938)
(29.5,0.276622)
(30,0.272828)
};
\addlegendentry{$n=\text{10,000}, N=\text{10,000}$}
\addplot[dashed, orange] coordinates {(0, 0.024426) (30, 0.024426)};
\addlegendentry{baseline $n=\text{10,000}$}

\addplot[name path=dp3_low, draw=none, forget plot] coordinates {
(0.5,0.352167334)
(1,0.353371587)
(1.5,0.351884457)
(2,0.355639738)
(2.5,0.339159568)
(3,0.352662522)
(3.5,0.348066468)
(4,0.346368837)
(4.5,0.352208167)
(5,0.347726951)
(5.5,0.354942482)
(6,0.354458103)
(6.5,0.362058475)
(7,0.35429069)
(7.5,0.347537991)
(8,0.345002821)
(8.5,0.341458992)
(9,0.344361029)
(9.5,0.346710157)
(10,0.339552787)
(10.5,0.331725205)
(11,0.325210833)
(11.5,0.325471845)
(12,0.31925967)
(12.5,0.313105038)
(13,0.31145599)
(13.5,0.304053644)
(14,0.30299744)
(14.5,0.293814105)
(15,0.28772429)
(15.5,0.290209608)
(16,0.279462821)
(16.5,0.267919233)
(17,0.255498115)
(17.5,0.250333095)
(18,0.235445471)
(18.5,0.23504996)
(19,0.225714974)
(19.5,0.216412946)
(20,0.212568204)
(20.5,0.203751397)
(21,0.194398496)
(21.5,0.186370792)
(22,0.17701893)
(22.5,0.169049529)
(23,0.160107807)
(23.5,0.150963538)
(24,0.144042253)
(24.5,0.132817081)
(25,0.124554409)
(25.5,0.119082386)
(26,0.113175365)
(26.5,0.109779185)
(27,0.104905455)
(27.5,0.101768047)
(28,0.096026769)
(28.5,0.091467612)
(29,0.088709767)
(29.5,0.083525058)
(30,0.079223355)
};
\addplot[name path=dp3_high, draw=none, forget plot] coordinates {
(0.5,0.382151226)
(1,0.383848253)
(1.5,0.382098103)
(2,0.385770982)
(2.5,0.368789712)
(3,0.382514598)
(3.5,0.377696092)
(4,0.375615483)
(4.5,0.381607033)
(5,0.376729209)
(5.5,0.384888558)
(6,0.383522697)
(6.5,0.391199445)
(7,0.38368323)
(7.5,0.376858329)
(8,0.374221339)
(8.5,0.370769808)
(9,0.373314331)
(9.5,0.375917843)
(10,0.368572493)
(10.5,0.359974795)
(11,0.354043407)
(11.5,0.354552155)
(12,0.34828289)
(12.5,0.342086962)
(13,0.34028305)
(13.5,0.332725396)
(14,0.33179632)
(14.5,0.322802055)
(15,0.31672211)
(15.5,0.319492472)
(16,0.308225979)
(16.5,0.296487807)
(17,0.283670365)
(17.5,0.278592025)
(18,0.263233249)
(18.5,0.26258684)
(19,0.252896226)
(19.5,0.243461294)
(20,0.239427796)
(20.5,0.230183323)
(21,0.219974144)
(21.5,0.211506808)
(22,0.20094923)
(22.5,0.192847431)
(23,0.182672833)
(23.5,0.172906542)
(24,0.165017427)
(24.5,0.152738919)
(25,0.143592951)
(25.5,0.137224814)
(26,0.130838875)
(26.5,0.127449775)
(27,0.121766705)
(27.5,0.118090033)
(28,0.110720111)
(28.5,0.105865988)
(29,0.102606073)
(29.5,0.097023742)
(30,0.092209925)
};

\addplot[green!60!black, fill opacity=0.20, draw=none, forget plot] fill between[of=dp3_low and dp3_high];

\addplot[green!60!black, mark=triangle*,mark size=1.2,thick] coordinates {
(0.5,0.36715928)
(1,0.36860992)
(1.5,0.36699128)
(2,0.37070536)
(2.5,0.35397464)
(3,0.36758856)
(3.5,0.36288128)
(4,0.36099216)
(4.5,0.3669076)
(5,0.36222808)
(5.5,0.36991552)
(6,0.3689904)
(6.5,0.37662896)
(7,0.36898696)
(7.5,0.36219816)
(8,0.35961208)
(8.5,0.3561144)
(9,0.35883768)
(9.5,0.361314)
(10,0.35406264)
(10.5,0.34585)
(11,0.33962712)
(11.5,0.340012)
(12,0.33377128)
(12.5,0.327596)
(13,0.32586952)
(13.5,0.31838952)
(14,0.31739688)
(14.5,0.30830808)
(15,0.3022232)
(15.5,0.30485104)
(16,0.2938444)
(16.5,0.28220352)
(17,0.26958424)
(17.5,0.26446256)
(18,0.24933936)
(18.5,0.2488184)
(19,0.2393056)
(19.5,0.22993712)
(20,0.225998)
(20.5,0.21696736)
(21,0.20718632)
(21.5,0.1989388)
(22,0.18898408)
(22.5,0.18094848)
(23,0.17139032)
(23.5,0.16193504)
(24,0.15452984)
(24.5,0.142778)
(25,0.13407368)
(25.5,0.1281536)
(26,0.12200712)
(26.5,0.11861448)
(27,0.11333608)
(27.5,0.10992904)
(28,0.10337344)
(28.5,0.0986668)
(29,0.09565792)
(29.5,0.0902744)
(30,0.08571664)
};
\addlegendentry{$n=\text{100,000}, N=\text{1,000}$}
\addplot[dotted, green!60!black] coordinates {(0, 0.00360119999999999) (30, 0.00360119999999999)};
\addlegendentry{baseline $n=\text{100,000}$}

\addplot[name path=dp4_low, draw=none, forget plot] coordinates {
(0.5,0.294754404)
(1,0.303760599)
(1.5,0.319606991)
(2,0.302132068)
(2.5,0.30503807)
(3,0.30242718)
(3.5,0.297107067)
(4,0.284617491)
(4.5,0.278602548)
(5,0.270316157)
(5.5,0.282347523)
(6,0.282186509)
(6.5,0.294526277)
(7,0.283071403)
(7.5,0.268476962)
(8,0.270578401)
(8.5,0.25858549)
(9,0.261308006)
(9.5,0.256697323)
(10,0.263083748)
(10.5,0.24001361)
(11,0.22060903)
(11.5,0.219647324)
(12,0.199536925)
(12.5,0.195356795)
(13,0.165087026)
(13.5,0.166614439)
(14,0.151339002)
(14.5,0.144795405)
(15,0.122401047)
(15.5,0.109433013)
(16,0.105095401)
(16.5,0.089314331)
(17,0.087304155)
(17.5,0.07687388)
(18,0.064800339)
(18.5,0.059459056)
(19,0.052950402)
(19.5,0.051216407)
(20,0.040870653)
(20.5,0.038829922)
(21,0.038455199)
(21.5,0.035408174)
(22,0.034926152)
(22.5,0.033964765)
(23,0.02925435)
(23.5,0.023597073)
(24,0.021309753)
(24.5,0.017093779)
(25,0.01431888)
(25.5,0.014217409)
(26,0.01199645)
(26.5,0.010929447)
(27,0.011205837)
(27.5,0.010105281)
(28,0.010015268)
(28.5,0.010460457)
(29,0.009323213)
(29.5,0.008726179)
(30,0.0078077)
};
\addplot[name path=dp4_high, draw=none, forget plot] coordinates {
(0.5,0.381005596)
(1,0.387759401)
(1.5,0.401993009)
(2,0.380987932)
(2.5,0.39704193)
(3,0.38649282)
(3.5,0.387492933)
(4,0.374022509)
(4.5,0.364077452)
(5,0.356083843)
(5.5,0.372532477)
(6,0.372053491)
(6.5,0.382633723)
(7,0.361448597)
(7.5,0.344323038)
(8,0.352341599)
(8.5,0.34141451)
(9,0.344171994)
(9.5,0.347382677)
(10,0.349196252)
(10.5,0.32006639)
(11,0.30071097)
(11.5,0.300112676)
(12,0.280063075)
(12.5,0.278363205)
(13,0.240912974)
(13.5,0.244825561)
(14,0.226540998)
(14.5,0.218044595)
(15,0.183758953)
(15.5,0.166126987)
(16,0.160344599)
(16.5,0.135885669)
(17,0.134775845)
(17.5,0.12128612)
(18,0.103359661)
(18.5,0.095220944)
(19,0.086049598)
(19.5,0.078823593)
(20,0.064929347)
(20.5,0.062250078)
(21,0.062184801)
(21.5,0.058791826)
(22,0.058753848)
(22.5,0.057915235)
(23,0.04930565)
(23.5,0.040842927)
(24,0.037090247)
(24.5,0.030266221)
(25,0.02552112)
(25.5,0.025182591)
(26,0.02184355)
(26.5,0.019750553)
(27,0.020234163)
(27.5,0.018974719)
(28,0.018104732)
(28.5,0.018339543)
(29,0.017356787)
(29.5,0.016633821)
(30,0.0155923)
};

\addplot[red, fill opacity=0.20, draw=none, forget plot] fill between[of=dp4_low and dp4_high];

\addplot[red,mark=x,mark size=1.5,thick] coordinates {
(0.5,0.33788)
(1,0.34576)
(1.5,0.3608)
(2,0.34156)
(2.5,0.35104)
(3,0.34446)
(3.5,0.3423)
(4,0.32932)
(4.5,0.32134)
(5,0.3132)
(5.5,0.32744)
(6,0.32712)
(6.5,0.33858)
(7,0.32226)
(7.5,0.3064)
(8,0.31146)
(8.5,0.3)
(9,0.30274)
(9.5,0.30204)
(10,0.30614)
(10.5,0.28004)
(11,0.26066)
(11.5,0.25988)
(12,0.2398)
(12.5,0.23686)
(13,0.203)
(13.5,0.20572)
(14,0.18894)
(14.5,0.18142)
(15,0.15308)
(15.5,0.13778)
(16,0.13272)
(16.5,0.1126)
(17,0.11104)
(17.5,0.09908)
(18,0.08408)
(18.5,0.07734)
(19,0.0695)
(19.5,0.06502)
(20,0.0529)
(20.5,0.05054)
(21,0.05032)
(21.5,0.0471)
(22,0.04684)
(22.5,0.04594)
(23,0.03928)
(23.5,0.03222)
(24,0.0292)
(24.5,0.02368)
(25,0.01992)
(25.5,0.0197)
(26,0.01692)
(26.5,0.01534)
(27,0.01572)
(27.5,0.01454)
(28,0.01406)
(28.5,0.0144)
(29,0.01334)
(29.5,0.01268)
(30,0.0117)
};
\addlegendentry{$n=\text{1,000,000}, N=\text{100}$}
\addplot[dashdotted, red] coordinates {(0, 0.00004) (30, 0.00004)};
\addlegendentry{baseline $n=\text{1,000,000}$}

  \end{axis}
\end{tikzpicture}
\caption{Relative error of the local-DP approach on four AR(1) series. The shaded areas show $\pm 2\,\mathrm{SEM}$ around the mean relative error.
The horizontal dashed lines indicate the relative error of our approach as a baseline.}
    \label{fig:local_DP}
\end{figure}
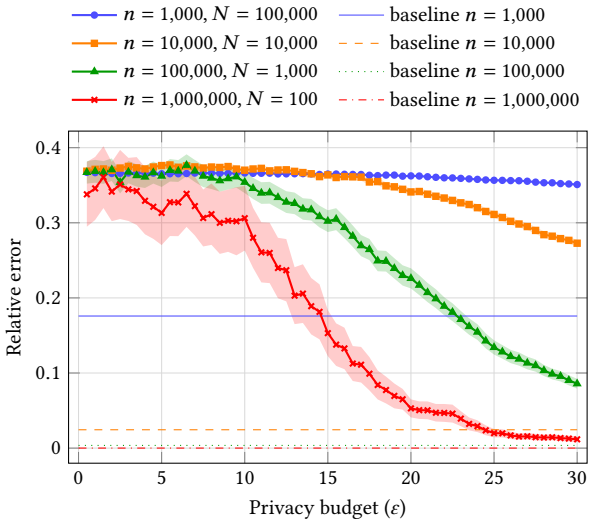

\begin{figure}
\centering
\newcommand{\subfigheight}{60pt}
\begin{subfigure}[b]{\columnwidth}
    \centering
    \includegraphics[width=\linewidth]{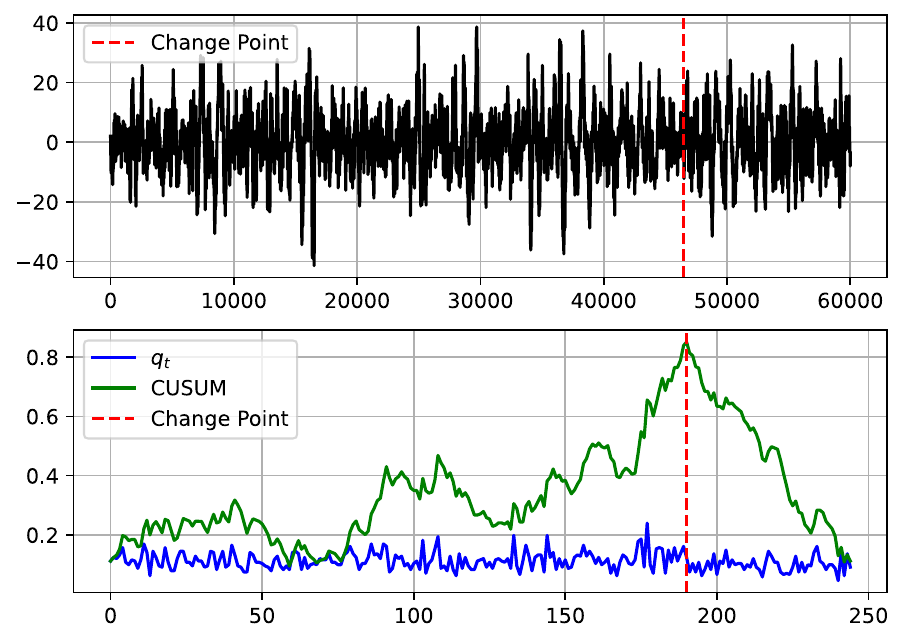}
    \caption{Sleep Phase Change Detection in EEG}
\end{subfigure}
\vspace{0pt}

\begin{subfigure}[b]{\columnwidth}
    \centering
    \includegraphics[width=\linewidth]{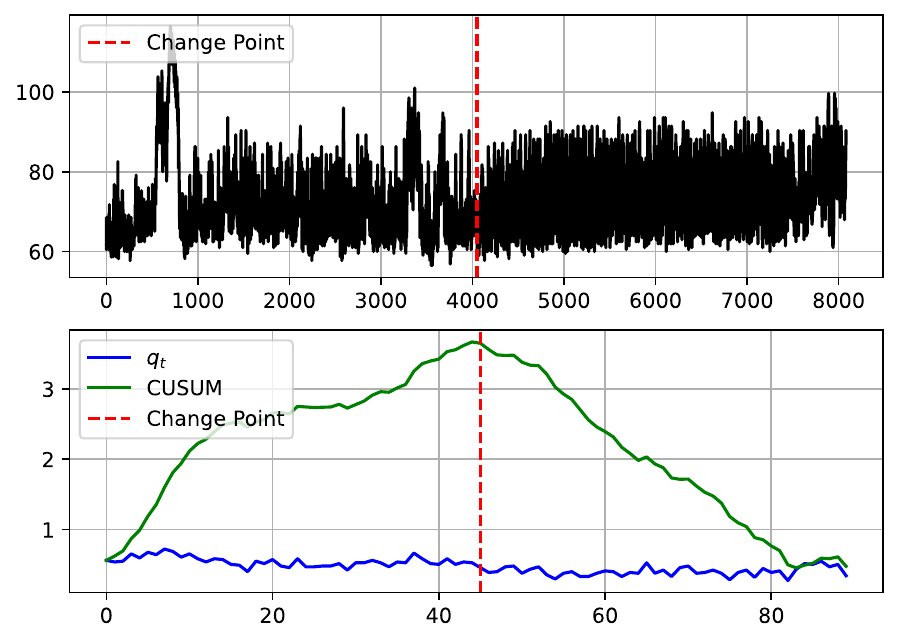}
    \caption{Heartbeat Interval Time Series}
\end{subfigure}
\vspace{0pt}

\begin{subfigure}[b]{\columnwidth}
    \centering
    \includegraphics[width=\linewidth]{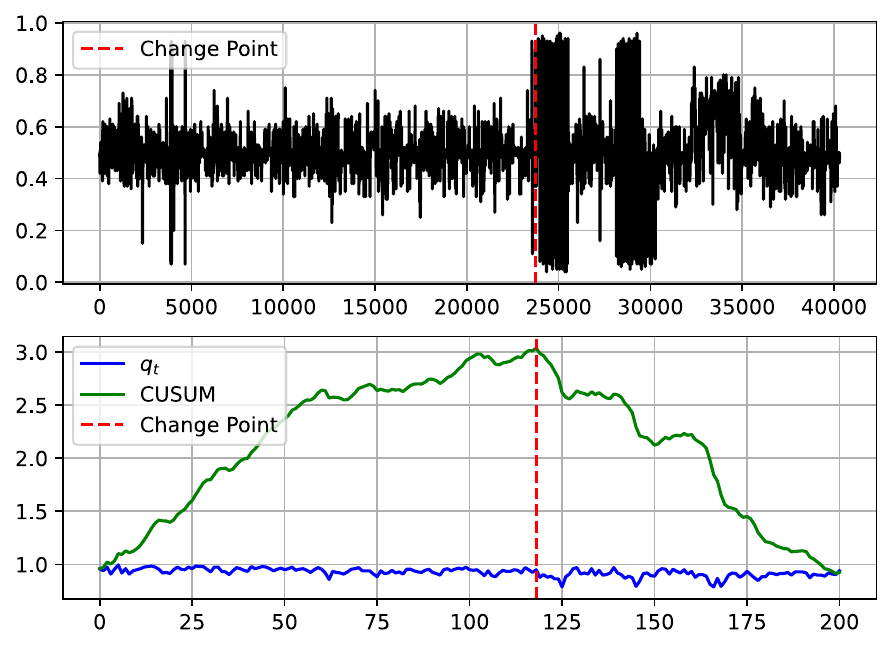}
    \caption{Network Traffic Anomaly Detection}
\end{subfigure}

\caption{Real-world time-series data with the detected change point (vertical dashed line).}
\label{fig:CP-datasets}
\end{figure}

\subsection{Comparison with DP-Based Solutions}
\label{sec:experiments_comparisonDP}

The only privacy-preserving CPD solutions available in the literature \cite{NEURIPS2018_f19ec2b8,JMLR:v22:19-770} work in the \emph{central} DP model, where the raw data are accessible to a trusted curator that applies a randomized algorithm to produce a privacy-preserving output.
In contrast, our solution operates entirely on encrypted data without requiring access to plaintext or any trusted server.
This is closer to the \textit{local} DP model, where the data curator applies noise to the data before outsourcing it to a (non-trusted) computing server.
These two paradigms are fundamentally different, both in terms of trust assumptions and privacy guarantees, making a direct comparison meaningless.
To compare our approach against DP, we opted for implementing a simple local-DP solution ourselves.

\subsubsection*{Comparison with Local-DP}
Our local-DP solution operates by adding calibrated noise to a clipped version of the time series before applying a standard CUSUM procedure.
The resulting privacy guarantees follow from the post-processing property of DP.
For a more detailed description, we refer to Appendix~\ref{appendix:DP}. 
We assess our approach against the local-DP solution on the AR(1) synthetic data, and report the results in Figure~\ref{fig:local_DP}.
The parameter $\delta$ is set to $1/n^2$ as commonly done in DP applications, and the error is computed via Equation~\ref{eq:error} and averaged over a large number of experiments ($N$) to account for its high variability.
For the clipping bound, we set $M = 1$, which empirically provides a reasonable trade-off between input information loss due to clipping and an overly large sensitivity.
The runtime and memory performance of the local-DP solution are of the same order of magnitude as the plaintext baseline, with client runtime being 0.1-9.3ms, server runtime going from 3.7ms to 1.54s, and memory usage upper bounded by \mytilde 40MB on both sides.
However, the accuracy degrades significantly as \(\varepsilon\) decreases, with relative error going below $20\%$ only for $\varepsilon > 13$ (for 1,000,000 points), making local-DP impractical in scenarios where accuracy is required.

\subsubsection*{Central DP}
We also include an empirical evaluation of \cite{NEURIPS2018_f19ec2b8} and \cite{JMLR:v22:19-770}.
These experiments, detailed in Appendix~\ref{app:central_DP}, are meant to provide an indicative sense of the accuracy trade-offs under central DP guarantees and are only presented for completeness.

It is important to note that both these algorithms assume that the data consists of independent observations and that the distributions before and after the change point are known in advance. For this reason, evaluations are performed on synthetic data designed to satisfy these assumptions. As remarked throughout our work, in real-world applications, time series exhibit temporal dependence, and pre- and post-change distributions are typically unknown and must be estimated.
Even when independence approximately holds, estimating the distributions introduces additional statistical noise and model misspecification.

It is also worth noting that the local DP mechanism we consider introduces noise proportional to \(1/\sqrt{\varepsilon}\), which is substantially larger than the \(1/\varepsilon\) noise scale required under central DP mechanisms. This discrepancy arises because, in our setting, privacy is enforced directly on the raw time series (effectively privatizing the identity function), and noise must be added to each individual data point. The total privacy noise thus accumulates over time, degrading the overall utility. In contrast, central DP schemes typically add noise only to aggregate statistics (e.g., means or sums), allowing for much lower noise magnitudes while still satisfying the same \((\varepsilon, \delta)\)-DP guarantees.

\section{Conclusion}
\label{sec:conclusion}

In this paper, we presented the first fully encrypted pipeline for change-point detection under homomorphic encryption.
Our solution combines ordinal-pattern-based estimators with recent advances in encrypted comparison, enabling the detection of structural changes in sensitive time series with plaintext-level accuracy and without exposing the underlying data. This establishes, for the first time, the feasibility of CPD in cryptographically private settings, a direction that had remained unexplored despite the central role of CPD in statistics.

As future work, our framework points to a general methodology for outsourced statistical analysis of time series via ordinal patterns.
In this direction, it will be possible to incorporate more refined statistics, such as permutation entropy or other complexity-based measures, into the encrypted domain.
A particularly challenging but promising extension is the treatment of multiple change points, which remains difficult even in the plaintext setting.

Finally, this work assumes operation under the alternative hypothesis, i.e., that a change point is known to exist.
Extending our approach to support full hypothesis testing under encryption, including the null hypothesis of no change, would require securely estimating decision thresholds or $p$-values, and represents an important open problem.
Addressing this challenge would further broaden the applicability of encrypted CPD to real-world scenarios where the presence of structural changes is not guaranteed.

\begin{acks}

We would like to thank Florian Hahn and Luca Mariot for their valuable insights.

\end{acks}

\bibliographystyle{ACM-Reference-Format}
\bibliography{main}

\appendix

\section{Encrypted Matrix Operations}
\label{app:enc-matrix-operations}

Here we provide some well-known algorithms for performing operations on encrypted square matrices of size $\vectorLength$, which must be a power of two.

\begin{algorithm}
\caption{$\sumR$}
\label{alg:sumR}
\begin{algorithmic}[1]
\Require $X$ encryption of a square matrix of size $\vectorLength$.
\Ensure $X$ encryption of a row vector.
\For{$i = 0, \dots, \log{\vectorLength} - 1$}
    \State $X \gets X + (X \ll \vectorLength \cdot 2^i)$
\EndFor
\State $X \gets X \cdot (1^\vectorLength \parallel 0^{\vectorLength(\vectorLength - 1)})$
\State \Return $X$
\end{algorithmic}
\end{algorithm}

\begin{algorithm}
\caption{$\sumC$}
\label{alg:sumC}
\begin{algorithmic}[1]
\Require $X$ encryption of a square matrix of size $\vectorLength$.
\Ensure $X$ encryption of a column vector.
\For{$i = 0, \dots, \log{\vectorLength} - 1$}
    \State $X \gets X + (X \ll 2^i)$
\EndFor
\State $X \gets X \cdot (1 \parallel 0^{\vectorLength - 1})^\vectorLength$
\State \Return $X$
\end{algorithmic}
\end{algorithm}

\begin{algorithm}
\caption{$\replR$}
\label{alg:replR}
\begin{algorithmic}[1]
\Require $X$ encryption of a row vector of size $\vectorLength$.
\Ensure $X$ encryption of a square matrix.
\For{$i = 0, \dots, \log{\vectorLength} - 1$}
    \State $X \gets X + (X \gg \vectorLength \cdot 2^i)$
\EndFor
\State \Return $X$
\end{algorithmic}
\end{algorithm}

\begin{algorithm}
\caption{$\replC$}
\label{alg:replC}
\begin{algorithmic}[1]
\Require $X$ encryption of a column vector of size $\vectorLength$.
\Ensure $X$ encryption of a square matrix.
\For{$i = 0, \dots, \log{\vectorLength} - 1$}
    \State $X \gets X + (X \gg 2^i)$
\EndFor
\State \Return $X$
\end{algorithmic}
\end{algorithm}

\begin{algorithm}
\caption{$\transR$}
\label{alg:transR}
\begin{algorithmic}[1]

\Require $X$ encryption of a vector $x$ encoded as a row.

\Ensure $X$ encryption of the vector $x$ encoded as a column.

\For{$i = 1, \dots, \ceil{\log{\vectorLength}}$}
    \State $X \gets X + (X \gg \vectorLength(\vectorLength-1) / 2^i)$
\EndFor
\State $X \gets X \cdot (1 \parallel 0^{\vectorLength - 1})^\vectorLength$
\State \Return $X$
\end{algorithmic}
\end{algorithm}

\begin{algorithm}
\caption{$\transC$}
\label{alg:transC}
\begin{algorithmic}[1]

\Require $X$ encryption of a vector $x$ encoded as a column.

\Ensure $X$ encryption of the vector $x$ encoded as a row.

\For{$i = 1, \dots, \ceil{\log{\vectorLength}}$}
    \State $X \gets X + (X \ll \vectorLength(\vectorLength-1) / 2^i)$
\EndFor
\State $X \gets X \cdot (1^\vectorLength \parallel 0^{\vectorLength(\vectorLength - 1)})$
\State \Return $X$
\end{algorithmic}
\end{algorithm}

\section{Differential Privacy Background}  
\label{appendix:DP}
We first report the standard definition of \((\varepsilon, \delta)\)-Differential Privacy, and Rényi Differential Privacy (RDP), which generalizes standard DP and is particularly useful for analyzing the privacy properties of mechanisms involving the composition of multiple queries.  For a more comprehensive discussion on these concepts, we refer to Mironov (2017) \cite{Mironov2017RnyiDP} and Dwork and Roth (2014) \cite{dwork2014algorithmic}.  
\begin{definition}
\label{def:DP} Let $\varepsilon>0$ and $\delta>0$. 
A randomized mechanism \( \mathcal{A}: \mathcal{Z}^n \to \mathcal{R} \) satisfies \((\varepsilon, \delta)\)-Differential Privacy (DP) if, for all adjacent datasets \( S_1, S_2 \in \mathcal{Z}^n \) (i.e., differing by at most one element) and for all measurable subsets \( \mathcal{S} \subseteq \mathcal{R} \), we have  
\[
\mathbb{P}[\mathcal{A}(S_1) \in \mathcal{S}] \leq e^\varepsilon \mathbb{P}[\mathcal{A}(S_2) \in \mathcal{S}] + \delta.
\]
\end{definition}  
The parameter \( \varepsilon \) controls the privacy loss, with smaller values ensuring stronger privacy guarantees, while \( \delta \) accounts for the probability of violating pure \(\varepsilon\)-DP.  

Next, the Rényi Differential Privacy is a relaxation of \((\varepsilon, \delta)\)-DP based on Rényi divergence, which facilitates tighter privacy accounting, especially under composition.  
\begin{definition}
For two probability distributions \( f \) and \( g \) supported over \( \mathcal{R} \), the Rényi divergence of order \( \beta > 1 \) is defined as  
\[
D_\beta(f\| g) :=\frac{1}{\beta-1} \log \mathbb{E}_{x \sim g}\left[\left(\frac{f(x)}{g(x)}\right)^\beta\right].  
\]
\end{definition}  

\begin{definition}
\((\beta, \varepsilon)\)-Rényi Differential Privacy (RDP). A randomized mechanism \( \mathcal{A}: \mathcal{Z}^n \to \mathcal{R} \) is said to be \(\varepsilon\)-Rényi differentially private of order \( \beta \) (denoted as \((\beta, \varepsilon)\)-RDP) if  
\begin{equation}
\label{def:reny-dp}
    \sup_ { S_1 \sim S_2 } D_\beta\left(\mathcal{A}(S_1) \| \mathcal{A}(S_2)\right) \leq \varepsilon.
\end{equation}
\end{definition}  

\textbf{Properties of Rényi Differential Privacy } 

We conclude this section by summarizing three fundamental properties of RDP \cite{Mironov2017RnyiDP}:  

\begin{enumerate}
    \item \textbf{Post-processing}: If \( \mathcal{A} \) is \((\beta, \varepsilon)\)-RDP and \( g \) is a randomized mapping, then \( g \circ \mathcal{A} \) is also \((\beta, \varepsilon)\)-RDP.
    
    \item \textbf{Adaptive Composition}: If \( \mathcal{A}_1 : \mathcal{Z}^n \to \mathcal{X}_1 \) is \((\beta, \varepsilon_1)\)-RDP, and \( \mathcal{A}_2 : \mathcal{X}_1 \times \mathcal{Z}^n \to \mathcal{X}_2 \) is \((\beta, \varepsilon_2)\)-RDP, then the combined mechanism  
    \[
    (\mathcal{A}_1(\cdot), \mathcal{A}_2(\mathcal{A}_1(\cdot), \cdot))
    \]
    satisfies \((\beta, \varepsilon_1 + \varepsilon_2)\)-RDP.  
    
    \item \textbf{From RDP to DP}: If \( \mathcal{A} \) is \((\beta, \varepsilon)\)-RDP, then it satisfies  
    \[
    \left(\varepsilon + \frac{\log(1/\delta)}{\beta - 1}, \delta\right)\text{-DP}
    \]
    for any \( 0 < \delta < 1 \).
\end{enumerate}  

\subsection{Releasing Time Series}

\textbf{Assumption}
Let $X_t$ admit a probability density $f \in C^0_c(\mathbb{R}) \cap L^\infty(\mathbb{R})$, i.e.\ $f$ is continuous, bounded, and compactly supported.  

\noindent
As a consequence, $X_t$ is almost surely bounded, meaning there exists $M>0$ such that 
\[
\mathbb{P}(|X_t| < M) = 1.
\]
Hence, the support of the data vector $\mathbb{X} := (X_1,\ldots,X_n)$ is given by
\[
\mathcal{Z}^n = [-M,M]^n.
\]
Note that Assumption 1 is an analogous of the truncation condition imposed by Zhang et al (Section 2.1 in 
 \cite{JMLR:v22:19-770}). 
\begin{definition}[Additive Gaussian mechanism]
\label{def:mechanism}
For $\sigma > 0$, let $(\varepsilon_j)_{j=1}^n$ be i.i.d.\ $\mathcal{N}(0,\sigma^2)$ random variables.  
We define the mechanism $\mathcal{A} : \mathcal{Z}^n \to \mathbb{R}^n$ by
\begin{align}
\label{eq:mechanism}
    \mathcal{A}(\mathbb{X}) 
    := \mathbb{X} + \boldsymbol{\varepsilon} 
\end{align}
where $\boldsymbol{\varepsilon} = (\varepsilon_1,\ldots,\varepsilon_n)$ and $\mathbf{1}=(1,\ldots,1)^\top$.  
In particular, each coordinate satisfies
\[
(\mathcal{A}(\mathbb{X}))_i \sim \mathcal{N}(X_i ,\, \sigma^2),
\]
independently across $i$.
\end{definition}

\noindent

\begin{proposition}[Gaussian RDP, equal variances]
\label{prop:gauss-rdp}
Let $\beta>1$ and $P = \mathcal{N}(\mu_1,\sigma^2)$, $Q = \mathcal{N}(\mu_2,\sigma^2)$. Then
\[
D_\beta(P \,\|\, Q) = \frac{\beta}{2\sigma^2} (\mu_1-\mu_2)^2.
\]
\end{proposition}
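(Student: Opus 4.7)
The plan is a direct computation of the expectation defining $D_\beta(P\|Q)$, using the explicit form of Gaussian densities and the moment generating function of a normal random variable. No measure-theoretic subtlety is involved since both measures have full support and $\beta$-finite relative density.

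First, I would write the log-ratio of the densities. Using $f_P(x)/f_Q(x) = \exp\!\bigl([(x-\mu_2)^2-(x-\mu_1)^2]/(2\sigma^2)\bigr)$ and expanding the squares with $\Delta := \mu_1-\mu_2$, one gets the affine-in-$x$ exponent
\[
\log\!\frac{f_P(x)}{f_Q(x)} = \frac{\Delta}{\sigma^2}\Bigl(x - \tfrac{\mu_1+\mu_2}{2}\Bigr).
\]
Raising to the power $\beta$ and integrating against $Q = \mathcal{N}(\mu_2,\sigma^2)$ therefore reduces to computing $\mathbb{E}_{X\sim Q}[\exp(tX)]$ with $t = \beta\Delta/\sigma^2$, which is the standard Gaussian MGF $\exp(t\mu_2 + t^2\sigma^2/2)$.

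Substituting this MGF and the constant factor $\exp\!\bigl(-\tfrac{\beta\Delta(\mu_1+\mu_2)}{2\sigma^2}\bigr)$ into the definition of Rényi divergence, the cross terms simplify: the linear-in-$\Delta$ contributions collapse to $-\beta\Delta^2/(2\sigma^2)$, while the quadratic MGF term contributes $\beta^2\Delta^2/(2\sigma^2)$. Their sum is $\beta(\beta-1)\Delta^2/(2\sigma^2)$. Dividing by $\beta-1$ (allowed since $\beta>1$) and taking the logarithm then yields exactly $\beta(\mu_1-\mu_2)^2/(2\sigma^2)$, as claimed.

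There is no genuine obstacle here; the only care needed is bookkeeping the signs when completing the square and recognising that the mixed term $\tfrac{\beta\Delta}{\sigma^2}\bigl(\mu_2 - \tfrac{\mu_1+\mu_2}{2}\bigr)$ equals $-\tfrac{\beta\Delta^2}{2\sigma^2}$, which is what cancels the order-$\beta$ piece of the MGF down to the clean $\beta(\beta-1)$ factor. An alternative (shorter) route is to invoke the well-known closed form for the Rényi divergence of two multivariate Gaussians with a common covariance, $D_\beta(\mathcal{N}(\mu_1,\Sigma)\|\mathcal{N}(\mu_2,\Sigma)) = \tfrac{\beta}{2}(\mu_1-\mu_2)^\top \Sigma^{-1}(\mu_1-\mu_2)$, and specialise to the one-dimensional, scalar-variance case.
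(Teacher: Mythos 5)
Your computation is correct: the log-density ratio is indeed affine in $x$, the MGF step is applied at the right point, and the cancellation of the linear terms to $-\beta\Delta^2/(2\sigma^2)$ against the $\beta^2\Delta^2/(2\sigma^2)$ quadratic contribution gives exactly $\beta(\beta-1)\Delta^2/(2\sigma^2)$, hence the claimed divergence after dividing by $\beta-1$. The paper states this proposition without proof, treating it as the standard closed form for the R\'enyi divergence of equal-variance Gaussians (as in Mironov's RDP paper), so your direct calculation simply fills in that standard argument and matches the intended derivation.
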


\begin{corollary}[RDP of $\mathcal{A}$]
\label{cor:rdp}
For neighboring datasets $\mathbb{X}, \mathbb{X}'$ differing only at index $k$, the mechanism \eqref{eq:mechanism} satisfies
\[
D_\beta(\mathcal{A}(\mathbb{X}) \,\|\, \mathcal{A}(\mathbb{X}')) 
= \frac{\beta}{2\sigma^2} (X_k - X_k')^2.
\]
Since $|X_k - X_k'|\le 2M$, we obtain the uniform bound
\[
D_\beta(\mathcal{A}(\mathbb{X}) \,\|\, \mathcal{A}(\mathbb{X}')) 
\;\le\; \frac{2\beta M^2}{\sigma^2}.
\]
\end{corollary}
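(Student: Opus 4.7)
The plan is to reduce the $n$-dimensional divergence computation to a one-dimensional instance of Proposition~\ref{prop:gauss-rdp} by exploiting the product structure of the mechanism. By Definition~\ref{def:mechanism}, the noise coordinates $\varepsilon_1,\ldots,\varepsilon_n$ are i.i.d.\ $\mathcal{N}(0,\sigma^2)$, so $\mathcal{A}(\mathbb{X})$ is distributed as the product measure $\bigotimes_{i=1}^n \mathcal{N}(X_i,\sigma^2)$, and similarly for $\mathcal{A}(\mathbb{X}')$. The first step I would take is to invoke the well-known additivity of Rényi divergence over product distributions: if $P=\bigotimes_i P_i$ and $Q=\bigotimes_i Q_i$, then $D_\beta(P\|Q) = \sum_{i=1}^n D_\beta(P_i\|Q_i)$. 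This follows directly from writing the likelihood ratio as a product and using Fubini on the defining expectation.

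Next, since $\mathbb{X}$ and $\mathbb{X}'$ are neighboring datasets differing only at index $k$, the $i$-th marginals of $\mathcal{A}(\mathbb{X})$ and $\mathcal{A}(\mathbb{X}')$ coincide for every $i\ne k$, hence $D_\beta(P_i\|Q_i)=0$ for those coordinates. The only surviving term is the $k$-th one, namely $D_\beta\bigl(\mathcal{N}(X_k,\sigma^2)\,\|\,\mathcal{N}(X_k',\sigma^2)\bigr)$. Applying Proposition~\ref{prop:gauss-rdp} with $\mu_1=X_k$ and $\mu_2=X_k'$ immediately yields the claimed equality
$$D_\beta(\mathcal{A}(\mathbb{X})\,\|\,\mathcal{A}(\mathbb{X}')) = \frac{\beta}{2\sigma^2}(X_k-X_k')^2.$$

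For the uniform bound, I would use the support assumption: since $\mathbb{X},\mathbb{X}'\in\mathcal{Z}^n=[-M,M]^n$, both $X_k$ and $X_k'$ lie in $[-M,M]$, so $|X_k-X_k'|\le 2M$ and hence $(X_k-X_k')^2\le 4M^2$. Substituting into the equality above gives the stated bound $\frac{2\beta M^2}{\sigma^2}$, which is uniform over all neighboring pairs and therefore also bounds the supremum appearing in~\eqref{def:reny-dp}.

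There is no real obstacle here: the proof is essentially an unwrapping of the product structure plus Proposition~\ref{prop:gauss-rdp}. The only mildly technical point is the additivity of Rényi divergence over products, but this is standard and can be cited directly from Mironov~\cite{Mironov2017RnyiDP}; if a self-contained derivation is preferred, it amounts to one line using the independence of the noise coordinates in the expectation defining $D_\beta$.
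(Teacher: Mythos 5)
Your proposal is correct and follows essentially the same route as the paper: the paper's proof also collapses the product likelihood ratio to the $k$-th coordinate (the factors for $i\ne k$ cancel since those marginals coincide) and then reduces to the one-dimensional Gaussian case of Proposition~\ref{prop:gauss-rdp}, with the uniform bound following from the support assumption $|X_k-X_k'|\le 2M$. Your explicit appeal to additivity of Rényi divergence over product measures is just a packaged form of the same computation the paper carries out inline.
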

\begin{proof}
    For two neighboring datasets \( \mathbb{X}, \mathbb{X}' \) that differ only in the \( k \)-th entry, it follows that  
\begin{align*}
    &D_\beta (\mathcal{A}(\mathbb{X}) \| \mathcal{A}(\mathbb{X}')) \\
    &= \frac{1}{\beta-1} \log \mathbb{E}_{x_i\sim \mathcal{N}( X_i ,\sigma^2 )} \left( \frac{\prod_i \mathcal{N}(X_i , \sigma^2)}{\prod_i \mathcal{N}(X'_i , \sigma^2 )} \right) \\
    &= \frac{1}{\beta -1} \log \mathbb{E}_{x_k\sim \mathcal{N}( X_k ,\sigma^2 )} \left( \frac{\mathcal{N}(X_k , \sigma^2 )}{ \mathcal{N}(X'_k , \sigma^2 )} \right) \\
    &=  D_\beta \left( (\mathcal{A}(\mathbb{X}))_k \| (\mathcal{A}(\mathbb{X}'))_k \right).
\end{align*}  
\end{proof}
\begin{theorem}[$(\varepsilon,\delta)$-DP guarantee]
\label{thm:dp}
For any $\delta \in (0,1)$, the mechanism $\mathcal{A}$ defined in \eqref{eq:mechanism} is $(\varepsilon,\delta)$-differentially private with
\begin{align}
\varepsilon
&= \frac{\Delta^2}{2\sigma^2} + \frac{\sqrt{2}\,\Delta}{\sigma}\,\sqrt{\log\!\left(\tfrac{1}{\delta}\right)},
\end{align}
where $\Delta := \sup_{\mathbb{X}\sim \mathbb{X}'} |X_k-X_k'| \le 2M$ denotes the global $\ell_2$-sensitivity.  
In particular,
\[
\varepsilon \;\le\; \frac{2M^2}{\sigma^2} + \frac{2\sqrt{2}\,M}{\sigma}\,\sqrt{\log\!\left(\tfrac{1}{\delta}\right)}.
\]
\end{theorem}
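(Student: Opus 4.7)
The plan is to derive the $(\varepsilon,\delta)$-DP guarantee by converting from Rényi DP and then optimizing over the Rényi order $\beta$. By Corollary~\ref{cor:rdp}, the mechanism $\mathcal{A}$ satisfies $\bigl(\beta,\tfrac{\beta\Delta^2}{2\sigma^2}\bigr)$-RDP for every $\beta>1$, where $\Delta=\sup_{\mathbb{X}\sim\mathbb{X}'}|X_k-X_k'|$ is the global $\ell_2$-sensitivity of a single-coordinate perturbation. Applying the RDP-to-DP conversion (Property~3 of Section~\ref{appendix:DP}), for any $\delta\in(0,1)$ and any $\beta>1$ the mechanism is $(\varepsilon(\beta),\delta)$-DP with
\[
\varepsilon(\beta) \;=\; \frac{\beta\Delta^2}{2\sigma^2} \;+\; \frac{\log(1/\delta)}{\beta-1}.
\]

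Next I would optimize $\varepsilon(\beta)$ over $\beta>1$. Writing $a=\Delta^2/(2\sigma^2)$, $b=\log(1/\delta)$, and substituting $u=\beta-1>0$, the function becomes $g(u)=au+a+b/u$, whose derivative $g'(u)=a-b/u^2$ vanishes at $u^\star=\sqrt{b/a}$, which is a minimum by convexity. Plugging $u^\star$ back yields
\[
\min_{\beta>1}\varepsilon(\beta)\;=\;a+2\sqrt{ab}\;=\;\frac{\Delta^2}{2\sigma^2}+\frac{\sqrt{2}\,\Delta}{\sigma}\sqrt{\log(1/\delta)},
\]
which is exactly the claimed expression. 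Since this holds for the optimal $\beta^\star$, the mechanism is $(\varepsilon,\delta)$-DP with the stated $\varepsilon$.

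For the uniform bound, I would use the assumption that $\mathcal{Z}^n=[-M,M]^n$: any two neighboring datasets differ in a single coordinate by at most $2M$, so $\Delta\le 2M$. Substituting into the closed-form expression and using $\Delta^2\le 4M^2$ yields
\[
\varepsilon \;\le\; \frac{2M^2}{\sigma^2} \;+\; \frac{2\sqrt{2}\,M}{\sigma}\sqrt{\log(1/\delta)},
\]
as claimed. There is no real obstacle here beyond the elementary one-variable optimization; the whole argument is a direct combination of Corollary~\ref{cor:rdp} with the standard RDP-to-DP conversion, and the only care needed is checking that $\beta^\star>1$ for all admissible $\delta$ (which follows from $b>0$) so the conversion is legitimately applied at the optimizer.
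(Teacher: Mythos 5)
Your proposal is correct and follows essentially the same route as the paper: invoke Corollary~\ref{cor:rdp} for the $(\beta,\beta\Delta^2/(2\sigma^2))$-RDP guarantee, apply the standard RDP-to-DP conversion, and optimize over $\beta$, obtaining the same $\beta^\star = 1 + \sqrt{2\sigma^2\log(1/\delta)/\Delta^2}$ and then bounding $\Delta\le 2M$. Your version merely spells out the elementary one-variable minimization in more detail than the paper does.
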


\begin{proof}
From Corollary~\ref{cor:rdp}, the mechanism is $(\beta,\varepsilon_{\mathrm{RDP}}(\beta))$-RDP with
\[
\varepsilon_{\mathrm{RDP}}(\beta) = \frac{\beta\Delta^2}{2\sigma^2}.
\]
The standard RDP-to-DP conversion gives, for any $\beta>1$,
\[
\varepsilon = \varepsilon_{\mathrm{RDP}}(\beta) + \frac{\log(1/\delta)}{\beta-1}.
\]
Optimizing in $\beta$ yields $\beta^\star = 1 + \sqrt{\tfrac{2\sigma^2 \log(1/\delta)}{\Delta^2}}$, which implies the stated bound.
\end{proof}
\begin{corollary}
    \label{cor:local_DP}
    Let $(X_t)_{t=1, \ldots, n}$ be a time series satisfying Assumption 1. For any $\varepsilon>0$ and any $\delta \in (0,1)$, the mechanism $\mathcal{A}$ defined in \eqref{eq:mechanism} with $\sigma_\text{DP}= \frac{\sqrt{2}\,M}{\varepsilon}\big(\sqrt{\log(1/\delta)}+\sqrt{\log(1/\delta)+\varepsilon}\big)$ is $(\varepsilon,\delta)$-differentially private .
\end{corollary}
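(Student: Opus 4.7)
\textbf{Proof proposal for Corollary~\ref{cor:local_DP}.}
The plan is to apply Theorem~\ref{thm:dp} and then simply solve for the noise scale $\sigma$ that makes the derived bound on the privacy parameter equal to the target value $\varepsilon$. The upper bound from Theorem~\ref{thm:dp} gives that $\mathcal{A}$ is $(\varepsilon',\delta)$-DP with
\[
\varepsilon' \;=\; \frac{2M^{2}}{\sigma^{2}} \;+\; \frac{2\sqrt{2}\,M}{\sigma}\,\sqrt{\log(1/\delta)}.
\]
Since $\varepsilon'$ is strictly decreasing in $\sigma$, it suffices to show that choosing $\sigma=\sigma_{\text{DP}}$ yields exactly $\varepsilon'=\varepsilon$; any $\sigma \ge \sigma_{\text{DP}}$ would then give $(\varepsilon,\delta)$-DP.

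The concrete steps I would carry out are as follows. First, substitute $u := 1/\sigma$ and rewrite the equation $\varepsilon' = \varepsilon$ as the quadratic
\[
2M^{2}\, u^{2} \;+\; 2\sqrt{2}\,M\,\sqrt{\log(1/\delta)}\; u \;-\; \varepsilon \;=\; 0.
\]
Solving via the standard quadratic formula and keeping the positive root gives
\[
u \;=\; \frac{-\sqrt{\log(1/\delta)} + \sqrt{\log(1/\delta)+\varepsilon}}{\sqrt{2}\,M}.
\]
Inverting and rationalizing the numerator by multiplying top and bottom by $\sqrt{\log(1/\delta)}+\sqrt{\log(1/\delta)+\varepsilon}$ yields
\[
\sigma \;=\; \frac{\sqrt{2}\,M\,\bigl(\sqrt{\log(1/\delta)}+\sqrt{\log(1/\delta)+\varepsilon}\bigr)}{\varepsilon},
\]
which coincides with the stated $\sigma_{\text{DP}}$. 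Finally, I would note that monotonicity of $\varepsilon'$ in $\sigma$ ensures that the choice $\sigma=\sigma_{\text{DP}}$ is in fact optimal among those making the Theorem~\ref{thm:dp} bound equal to $\varepsilon$, closing the argument.

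There is no real obstacle here: the entire proof reduces to Theorem~\ref{thm:dp} followed by elementary algebra. The only thing to be careful about is the sign choice in the quadratic formula (one must pick the root giving positive $\sigma$) and the rationalization step that recasts the difference of square roots as the sum of square roots divided by $\varepsilon$. Both are routine.
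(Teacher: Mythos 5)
Your proposal is correct and follows essentially the same route as the paper: both apply Theorem~\ref{thm:dp}, set the resulting privacy bound equal to $\varepsilon$, and solve the quadratic in $1/\sigma$ (the paper works with the general sensitivity $\Delta$ and substitutes $\Delta = 2M$ at the end, while you substitute the bound $2M$ up front, which is equivalent). Your algebra, including the sign choice and the rationalization $\sqrt{L+\varepsilon}-\sqrt{L} \mapsto \varepsilon/(\sqrt{L}+\sqrt{L+\varepsilon})$, checks out.
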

\begin{proof}
Let \(L:=\log(1/\delta)\). Solving \(\varepsilon=\Delta^2/(2\sigma^2)+(\sqrt{2}\,\Delta/\sigma)\sqrt{L}\) for \(\sigma\) gives
\[
\sigma=\frac{\Delta}{\sqrt{2}\,\varepsilon}\big(\sqrt{L}+\sqrt{L+\varepsilon}\big)
\]
With \(\Delta=2M\) yields $\sigma_\text{DP}$ as stated in the Corollary. 
\end{proof}

\subsection{Local DP-CPD}
Corollary~\ref{cor:local_DP} specifies the amount of noise required for the mechanism \(\mathcal{A}\) to satisfy \((\varepsilon, \delta)\)-differential privacy. This guarantee holds under Assumption~1, which imposes bounded sensitivity on the input time series. When this assumption is not fulfilled—i.e., when the time series may contain unbounded or heavy-tailed values—it is sufficient to apply a componentwise Huber-type clipping transformation:
\[
\phi_M(X_t) = 
\begin{cases}
X_t, & \text{if } |X_t| \leq M, \\
M,   & \text{if } X_t > M, \\
-M,  & \text{if } X_t < -M,
\end{cases}
\]
for some truncation level \(M > 0\).

We then define the transformed process \(Y_t = \phi_M(X_t)\), and apply a CUSUM-type statistic to \(\{Y_t\}\). Since differential privacy is preserved under post-processing, the overall procedure remains \((\varepsilon, \delta)\)-DP. That is, if an algorithm is \((\varepsilon, \delta)\)-DP, then any deterministic function applied to its output also satisfies \((\varepsilon, \delta)\)-DP, see \cite{dwork2014algorithmic}.

\section{Central DP CPD simulations}
\label{app:central_DP}
This section reports results for Algorithm~3 of Zhang et al.~\cite{JMLR:v22:19-770} and Algorithm~1 of~\cite{NEURIPS2018_f19ec2b8}. We evaluate both methods on synthetic time series with a single change in the data-generating distribution. For a series of length \(n\) and a fixed fraction \(\tilde{\tau}\in(0,1)\), the true change-point is \(\tau=\lfloor \tilde{\tau} n\rfloor\), and the observations are independent with a distributional break at \(\tau\):
\begin{align*} &X_t\sim \mathcal{N}(\mu_1, \sigma_1^2)\quad\text{for }\,\, t=1, \ldots,\tau \\ &X_t\sim \text{Lap}(\mu_2, \sigma_2)\quad\text{for }\,\,t=\tau+1, \ldots,n \end{align*}
where the post-change Laplace law is parameterized so that \(\sigma_2\) denotes its standard deviation (equivalently, scale \(b=\sigma_2/\sqrt{2}\)). 

Following the protocol of Section~\ref{sec:experiments_comparisonDP}, for each setting we simulate \(N\) independent series \(\{X^{(i)}_t\}_{t=1}^n\), apply both methods, and record the estimated change \(\hat{k}^{(i)}\). Performance is summarized by the mean absolute relative error. 
All remaining experimental choices (privacy calibration, grids over \(\varepsilon\), etc.) match Section~\ref{sec:experiments_comparisonDP}.

We report the accuracy of both methods in the case of change of variance in Figure~\ref{fig:centralDPsetting1}, and in the case of change of both mean and variance in Figure~\ref{fig:centralDPsetting2}.

\begin{figure}[h!]
    \centering
      \resizebox{\linewidth}{!}{
    \begin{tikzpicture}
      \pgfplotsset{scaled y ticks=false}
      \begin{axis}[
        xlabel={Privacy Budget ($\varepsilon$)},
        ylabel={Relative Error},
        grid=both,
        major grid style={line width=.2pt, draw=gray!30},
        minor grid style={line width=.1pt, draw=gray!10},
        every axis plot/.append style={line width=1.1pt},
        mark options={solid},
        enlarge x limits=0.02,
        enlarge y limits=0.05,
        ymin=0,
        height=6cm, width=\linewidth,
        legend style={at={(0.5,1.05)}, anchor=south, draw=none, yshift=-2pt},
        legend columns=1, legend cell align={left}
      ]

       \pgfplotsset{cycle list={
            {myBlue, mark=triangle*, solid},
            {myGreen, mark=o, dashed},
            {myRed, mark=+, solid},
            {myPurple, mark=square*, dashed}
        }}
        \input{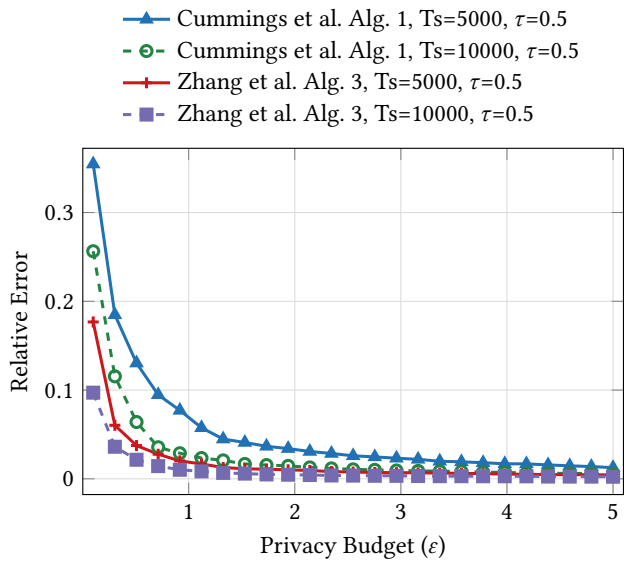}

      \end{axis}
    \end{tikzpicture}
    }
    \caption{Change of variance: $\sigma_1 \neq \sigma_2$ and $\mu_1 = \mu_2$.}
    \label{fig:centralDPsetting1}
\end{figure}

\begin{figure}[h!]
    \centering
    \resizebox{\linewidth}{!}{
    \begin{tikzpicture}
      \pgfplotsset{scaled y ticks=false}
      \begin{axis}[
        xlabel={Privacy Budget ($\varepsilon$)},
        ylabel={Relative Error},
        grid=both,
        major grid style={line width=.2pt, draw=gray!30},
        minor grid style={line width=.1pt, draw=gray!10},
        every axis plot/.append style={line width=1.1pt},
        mark options={solid},
        enlarge x limits=0.02,
        enlarge y limits=0.05,
        ymin=0,
        height=6cm, width=\linewidth,
        legend style={at={(0.5,1.05)}, anchor=south, draw=none, yshift=-2pt},
        legend columns=1, legend cell align={left}
      ]

        \pgfplotsset{cycle list={
            {myBlue, mark=triangle*, solid},
            {myGreen, mark=o, dashed},
            {myRed, mark=+, solid},
            {myPurple, mark=square*, dashed}
        }}

        \input{pics/setting2_tau_0.5_addplots.tex}

      \end{axis}
    \end{tikzpicture}
    }
    \caption{Change of mean and variance: $\sigma_1 \neq \sigma_2$ and $\mu_1\neq \mu_2$.}
    \label{fig:centralDPsetting2}
\end{figure}

\end{document}